\DeclareFontFamily{U}{mathx}{\hyphenchar\font45}
\DeclareFontShape{U}{mathx}{m}{n}{
      <5> <6> <7> <8> <9> <10>
      <10.95> <12> <14.4> <17.28> <20.74> <24.88>
      mathx10
      }{}
\DeclareSymbolFont{mathx}{U}{mathx}{m}{n}
\DeclareMathSymbol{\bigtimes}{1}{mathx}{"91}
\definecolor{DarkRed}{rgb}{0.5,0.1,0.1}
\definecolor{DarkBlue}{rgb}{0.1,0.1,0.5}
\definecolor{ForestGreen}{rgb}{0.1333,0.5451,0.1333}
\definecolor{Red}{rgb}{0.9,0,0}
\crefname{property}{property}{Property}
\crefname{equation}{eq}{Eq}
\def\BState{\State\hskip-\ALG@thistlm}
\newtheorem{theorem}{Theorem}
\newtheorem{lemma}{Lemma}[section]
\newtheorem{proposition}[lemma]{Proposition}
\newtheorem{corollary}[lemma]{Corollary}
\newtheorem{claim}[lemma]{Claim}
\newtheorem*{claim*}{Claim}
\newtheorem*{proposition*}{Proposition}
\newtheorem*{lemma*}{Lemma}
\newtheorem*{problem*}{Problem}
\crefname{lemma}{Lemma}{Lemmas}
\crefname{claim}{Claim}{Claims}
\newtheorem{mdresult}{Result}
\newenvironment{result}{\begin{mdframed}[backgroundcolor=lightgray!40,topline=false,rightline=false,leftline=false,bottomline=false,innertopmargin=2pt]\begin{mdresult}}{\end{mdresult}\end{mdframed}}
\theoremstyle{definition}
\newtheorem{mdinvariant}[lemma]{Definition}
\newenvironment{Definition}{\begin{mdframed}[hidealllines=false,innerleftmargin=5pt,backgroundcolor=gray!10,innertopmargin=2pt]\begin{mdinvariant}}{\end{mdinvariant}\end{mdframed}}
\newtheorem{mdalg}{Algorithm}
\newenvironment{Algorithm}{\begin{tbox}\begin{mdalg}}{\end{mdalg}\end{tbox}}
\newtheoremstyle{restate}{}{}{\itshape}{}{\bfseries}{~(restated).}{.5em}{\thmnote{#3}}
\theoremstyle{restate}
\newcommand{\Qed}[1]{\ensuremath{\qed_{\,\,\textnormal{\Cref{#1}}}}}
\renewcommand{\qed}{\nobreak \ifvmode \relax \else
      \ifdim\lastskip<1.5em \hskip-\lastskip
      \hskip1.5em plus0em minus0.5em \fi \nobreak
      \vrule height0.75em width0.5em depth0.25em\fi}
\newcommand{\eps}{\ensuremath{\varepsilon}}
\newcommand{\Paren}[1]{\Big(#1\Big)}
\newcommand{\Bracket}[1]{\Big[#1\Big]}
\newcommand{\bracket}[1]{\left[#1\right]}
\newcommand{\paren}[1]{\ensuremath{\left(#1\right)}\xspace}
\newcommand{\card}[1]{\left\vert{#1}\right\vert}
\newcommand{\IN}{\ensuremath{\mathbb{N}}}
\newcommand{\floor}[1]{{\left\lfloor{#1}\right\rfloor}}
\newcommand{\expect}[1]{\Exp\bracket{#1}}
\newcommand{\set}[1]{\ensuremath{\left\{ #1 \right\}}}
\newcommand{\poly}{\mbox{\rm poly}}
\DeclareMathOperator*{\Exp}{\ensuremath{{\mathbb{E}}}}
\DeclareMathOperator*{\Prob}{\ensuremath{\textnormal{Pr}}}
\renewcommand{\Pr}{\Prob}
\newenvironment{tbox}{\begin{tcolorbox}[
		enlarge top by=5pt,
		enlarge bottom by=5pt,
		 breakable,
		 boxsep=0pt,
                  left=4pt,
                  right=4pt,
                  top=10pt,
                  arc=0pt,
                  boxrule=1pt,toprule=1pt,
                  colback=white
                  ]
	}
{\end{tcolorbox}}
\newcommand{\event}{\ensuremath{\mathcal{E}}}
\newcommand{\supp}[1]{\ensuremath{\textnormal{\text{supp}}(#1)}}
\newcommand{\II}{\ensuremath{\mathbb{I}}}
\newcommand{\mireal}[1][]{
  \ifx\relax#1\relax%
    \II(\mione \,; \mitwo)%
  \else%
    \II(\mione \,; \mitwo\mid #1)%
  \fi
}
\newcommand{\coloring}{\ensuremath{\textnormal{\textsf{Coloring}}}\xspace}
\newcommand{\Gi}[1]{\ensuremath{G}_{#1}}
\newcommand{\Mi}[1]{\ensuremath{M}_{#1}}
\newcommand{\Mstari}[1]{\ensuremath{M^*}_{#1}}
\newcommand{\Gbase}{\ensuremath{G_{\textnormal{\texttt{Base}}}}\xspace}
\newcommand{\Ebase}{\ensuremath{E_{\textnormal{\texttt{Base}}}}\xspace}
\newcommand{\Gmiss}{\ensuremath{G_{\textnormal{\texttt{Miss}}}}\xspace}
\newcommand{\Emiss}{\ensuremath{E_{\textnormal{\texttt{Miss}}}}\xspace}
\newcommand{\PPi}[1]{P_{#1}}
\newcommand{\Ei}[1]{E_{#1}}
\newcommand{\GR}{\ensuremath{\mathbb{G}}}
\newcommand{\GRp}{\ensuremath{\widetilde{\GR}}}
\newcommand{\phistar}{\ensuremath{\phi^{*}}}
\newcommand{\CC}{\ensuremath{\mathcal{C}}}
\newcommand{\CCstar}{\ensuremath{\mathcal{C}^{\star}}}
\newcommand{\istar}{i^{\star}}
\title{Deterministic Graph Coloring in the Streaming Model}
\author{
Sepehr Assadi\footnote{(\texttt{sepehr.assadi@rutgers.edu}) Department of Computer Science, Rutgers University. Research supported in part by the NSF CAREER award CCF-2047061 and a gift from Google Research.} \and 
Andrew Chen\footnote{(\texttt{ac2337@cornell.edu}) Department of Mathematics, Cornell University. Research  done primarily as part of 2020 REU program at DIMACS and Rutgers, supported by the NSF grant CCF-1852215.} \and
Glenn Sun\footnote{(\texttt{glennsun@ucla.edu}) Department of Mathematics, University of California--Los Angeles. Research  done primarily as part of 2021 REU program at DIMACS and Rutgers, supported by the NSF grant CCF-1836666.}}
\date{}
\begin{document}
\maketitle

\pagenumbering{roman}

\begin{abstract}
	Recent breakthroughs in graph streaming have led to the design of single-pass semi-streaming algorithms for various graph coloring problems such as $(\Delta+1)$-coloring, degeneracy-coloring, coloring triangle-free graphs, and others. 
	These algorithms are all {randomized} in crucial ways and whether or not there is any deterministic analogue of them has remained an important open question in this line of work. 
	
	\medskip
	
	We  settle this fundamental question by proving that  there is no {deterministic} single-pass semi-streaming algorithm that given a graph $G$ with maximum degree $\Delta$, can output a proper coloring of $G$ 
	using any number of colors which is sub-exponential in $\Delta$. Our proof is based on analyzing the multi-party communication complexity of a related communication game, using 
	random graph theory type arguments that may be of independent interest. 
	
	\medskip
	
	We complement our  lower bound by showing that just one extra pass over the input allows one to recover an $O(\Delta^2)$ coloring via a deterministic semi-streaming algorithm. This result is further extended 
	to an $O(\Delta)$ coloring  in $O(\log{\Delta})$ passes even in dynamic streams. 
\end{abstract}

\clearpage

\setcounter{tocdepth}{3}
\tableofcontents

\clearpage

\pagenumbering{arabic}
\setcounter{page}{1}


\section{Introduction}\label{sec:intro}

Coloring graphs with a small number of colors is a central problem in graph theory with a wide range of applications in computer science. A proper $c$-coloring of a graph $G=(V,E)$ assigns a color from the palette $\set{1,\ldots,c}$ 
to the vertices so that no edge is monochromatic. 
We study  graph coloring  in the semi-streaming model introduced by~\cite{FeigenbaumKMSZ05}: 
the edges of an $n$-vertex input graph are arriving one by one in a stream and the algorithm can make one (or a few) passes over the stream and use a limited memory of $O(n \cdot \poly\!\log\!{(n)})$ bits. At the end, 
it should output a proper coloring of the input graph. The semi-streaming model is particularly motivated by its applications to processing massive graphs and has received extensive attention in the last two decades.  

Similar to the classical setting, it is known that approximating the minimum number of colors for proper coloring is quite intractable in the semi-streaming model~\cite{HalldorssonSSW12,AbboudCKP19,CormodeDK19}. 
As a result, the  interest in this problem in graph streaming has primarily been on obtaining colorings with number of colors proportional to certain combinatorial parameters of input graphs, such as maximum degree or degeneracy. 
On this front, a breakthrough result of~\cite{AssadiCK19a} gave the first semi-streaming algorithm for $(\Delta+1)$ coloring of graphs with maximum degree $\Delta$ (see also the independent work of~\cite{BeraG18} that obtained
an $O(\Delta)$ coloring algorithm). Another remarkable result is that of~\cite{BeraCG19} that gave a semi-streaming algorithm for $(\kappa+o(\kappa))$-coloring of graphs with degeneracy $\kappa$. See~\cite{BehnezhadDHKS19,CormodeDK19,AlonA20,BhattacharyaBMU21} for other related results. 

Perhaps, the single most common characteristic of all results in this line of work is that \emph{they crucially rely on randomization}. For instance, one of the strongest tool for streaming graph coloring is the {palette sparsification theorem} of~\cite{AssadiCK19a} 
which states the following: if we sample $O(\log{n})$ colors from $\set{1,\ldots,\Delta+1}$ for each vertex
independently and uniformly at random, then with high probability, the entire graph can be colored using only the sampled colors of each vertex. 
This result immediately leads to a semi-streaming algorithm for $(\Delta+1)$ coloring: after sampling $O(\log{n})$ colors for each vertex, only $O(n\log^2\!{(n)})$ edges can potentially become monochromatic under \emph{any} coloring of vertices from their sampled colors; thus, the algorithm can simply store these edges 
throughout the stream and find the desired coloring at the end (which is guaranteed to exist by the palette sparsification theorem). But the resulting algorithm is inherently randomized with this tool. 

This state-of-affairs of graph coloring  in admitting only randomized semi-streaming algorithms is rather unusual in the literature. Indeed, most  problems of interest 
in the semi-streaming model such as (minimum) spanning trees~\cite{FeigenbaumKMSZ05}, edge/vertex connectivity~\cite{GuhaMT15}, cut and spectral sparsifiers~\cite{McGregor14}, spanners~\cite{FeigenbaumKMSZ05,FeigenbaumKMSZ08} and weighted matchings~\cite{PazS17} all admit deterministic algorithms with the same performance as best known randomized algorithms\footnote{There are some other exceptions to this rule also; moreover, in many cases, randomization can further help, e.g., by 
reducing the runtime of algorithms, but typically not that much with their space. We also emphasize that this ``rough equivalence of power'' of deterministic vs randomized algorithms only exist in the semi-streaming model: once 
we reduce the space to $o(n)$, deterministic algorithms are  much weaker than randomized ones for most problems.} (or altogether do not  admit  non-trivial randomized algorithms; see,
 e.g.~\cite{FeigenbaumKMSZ08,AssadiKL16,AssadiCK19a,CormodeDK19,BhattacharyaBMU21} for various examples of such impossibility results). Consequently, there has been a general interest in de-randomizing the semi-streaming algorithms for graph coloring, following the same recent trend in various closely related models such as distributed
computing~\cite{Parter18,CzumajDP20,Censor-HillelPS20,GhaffariK20} and Massively Parallel Computation (MPC) algorithms~\cite{CzumajDP21a,CzumajDP21b}. This has led to the following 
important open question:
\vspace{-5pt}
\begin{quote}
	\emph{Can we design \emph{\textbf{deterministic semi-streaming algorithms}} for graph coloring with similar guarantees as the randomized ones? In particular, are there deterministic semi-streaming algorithms for $(\Delta+1)$-coloring, $O(\Delta)$ coloring, 
	or even $\poly{(\Delta)}$ coloring? } 
\end{quote} 
\vspace{5pt}

\bigskip


\subsection{Our Contributions}\label{sec:results}

Our main result is a strong negative answer to this fundamental open question: coloring graphs even with $\exp\paren{\Delta^{o(1)}}$ colors is not possible with a deterministic semi-streaming algorithm! 

\begin{result}\label{res:lower}
	There does \underline{not} exist any deterministic single-pass semi-streaming algorithm for coloring graphs of maximum degree $\Delta$ using at most $\exp(\Delta^{o(1)})$ colors (even when $\Delta$ is known to the algorithm at the beginning of the stream).
\end{result}

We note that~\Cref{res:lower} extends to the entire range of streaming algorithms with $o(n\Delta)$ space as well; see~\Cref{cor:main-lower} for the formalization of this result and precise bounds.   

Previously, no space lower bound was known for deterministic semi-streaming algorithms even for $(\Delta+1)$-coloring and even for dynamic streams that also allow for deleting edges from the 
stream\footnote{Unlike insertion-only streams, \emph{all} known algorithms in dynamic streams are randomized and for a crucial reason. 
It is easy to see that any non-trivial algorithm that should  return a single edge from the graph cannot be deterministic in dynamic streams: one can simply use the memory of the algorithm to recover the entire input  by passing each returned edge as a deletion to the algorithm, hence forcing it to return another edge of the graph, until we recover the entire graph. This means the memory of the algorithm has to be $\Omega(n^2)$ bits, enough to store the entire input. This approach however does not 
apply to $(\Delta+1)$-coloring at it does not require returning any edge as output.} (but see~\Cref{sec:recent} for a recent independent work). On the other hand,~\Cref{res:lower}
effectively rules out  any non-trivial algorithm for graph coloring: the best thing to do in $O(n\log^{q}\!{(n)})$ space is to either store the entire input graph when $\Delta \lesssim \log^{q}{(n)}$ and find a $(\Delta+1)$ coloring at the end, or color 
all vertices differently which results in $n \approx \exp\paren{\Delta^{1/q}}$-coloring for $\Delta \gtrsim \log^{q}{(n)}$. Combined with the randomized algorithm of~\cite{AssadiCK19a} for $(\Delta+1)$ coloring,~\Cref{res:lower} presents one of the strongest separations between deterministic and randomized algorithms in the semi-streaming model. 

Given the strong impossibility result of~\Cref{res:lower}, it is natural to consider standard relaxation of the problem. For this, we consider \emph{multi-pass} algorithms that read the stream more than once. 
Multi-pass algorithms have also been studied extensively since the introduction of semi-streaming algorithms in~\cite{FeigenbaumKMSZ05}. We show that unlike in a single pass, deterministic semi-streaming 
multi-pass algorithms can indeed solve non-trivial graph coloring problems already in just two passes. 

\begin{result}\label{res:upper}
	There exist deterministic semi-streaming algorithms for coloring graphs of maximum degree $\Delta$ using $O(\Delta^2)$ colors in two passes or $O(\Delta)$ colors in $O(\log{\Delta})$ passes. 
	The algorithms can be implemented even in dynamic streams with edge deletions (still deterministically). 
\end{result}

Previously, no non-trivial deterministic semi-streaming algorithm was known for graph coloring. In light of~\Cref{res:lower}, our algorithms in~\Cref{res:upper} also provide one of the strongest separation between two-pass and single-pass algorithms 
(see~\cite{AssadiD21} for another example via min-cuts). Finally, our algorithms in~\Cref{res:upper} are among the first deterministic algorithms that work on dynamic streams.   

\medskip
All in all, our results collectively establish surprising aspects of graph coloring in the semi-streaming model, further cementing the role of this fundamental problem in capturing various different separations and properties in this model. 

\subsection{Our Techniques}\label{sec:techniques} 

We now give a quick summary of our techniques here. More details can be found in the high-level overview of our approach in~\Cref{sec:overview}. 

\paragraph{Lower bound of~\Cref{res:lower}.} Our lower bound in~\Cref{res:lower} is proven by considering the \emph{multi-party} communication complexity of the coloring problem: here, the edges of input graph are partitioned across the players and they can speak in turn, once each, to compute 
a proper coloring of the input using as small as possible number of colors. It is a standard fact that communication complexity lower bounds the space of streaming algorithms. The main technical contribution of 
our work is thus a communication lower bound  for this problem. 

We obtain our lower bound by designing an adversary that specifies the inputs of players via \emph{random subgraphs} chosen \emph{adaptively} based on the messages of prior players. The adaptivity in distribution of  inputs allows us to prove a lower bound  specifically for deterministic algorithms (as a non-adaptive distributional lower bound 
also works for randomized algorithms by Yao's minimax principle~\cite{Yao77}). At the same time, working with these distributional inputs makes our arguments 
much simpler compared to using a typical counting argument over all possible graphs (we elaborate more on this in~\Cref{sec:overview}). One main ingredient of this proof is determining the power of 
communication protocols for ``compressing \emph{non-edges}'' in a random subgraph, compared to standard approaches that bound the number of \emph{edges} that can be recovered from a compression.  

\paragraph{Algorithms of~\Cref{res:upper}.} Our algorithmic results are based on finding a way to \emph{non-properly} color the graph using a small number of colors, so that 
the number of monochromatic edges is small. We can then store these edges explicitly and use them to further refine this non-proper coloring to a proper coloring of the entire graph (for $O(\Delta^2)$ coloring) 
or further extending a partial coloring and recurse (for $O(\Delta)$ coloring).

To be able to implement this strategy, we design families of coloring functions of small size so that for any given graph, at least one of these coloring functions lead to the 
desired non-proper coloring with a small number of monochromatic edges. These families are obtained via  standard tools in de-randomization, namely,  {\emph{near-universal hash functions}}.  

\subsection{Recent Related Work}\label{sec:recent} 
Independently and concurrently to us,~\cite{ChakrabartiGS21}  studied graph coloring in the semi-streaming model but for \emph{adversarially robust} algorithms (see~\cite{ChakrabartiGS21} and~\cite{Ben-EliezerJWY20} for definition and context). They prove that no semi-streaming algorithm 
can be adversarially robust when using $o(\Delta^2)$ coloring. As all deterministic algorithms are adversarially robust, their result also implies that no deterministic semi-streaming algorithm can achieve an $o(\Delta^2)$ coloring. The authors of~\cite{ChakrabartiGS21} 
also state that: ``A major remaining open question is whether this [lower bound] can be matched, perhaps by a deterministic semi-streaming $O(\Delta^2)$ coloring algorithm. In fact,
it is not known how to get even a $\poly{(\Delta)}$-coloring deterministically''. Our~\Cref{res:lower} fully settles their open question for deterministic algorithms in negative. Incidentally,~\cite{ChakrabartiGS21}  provides a randomized but adversarially robust 
semi-streaming algorithm for $O(\Delta^3)$ coloring. Thus that one cannot hope for an $\exp(\Delta^{o(1)})$ coloring lower bound like ours in their model. Technique-wise, the two work are  entirely disjoint. 

\subsection{Further Related Work}\label{sec:related}
Recently, there has been a surge of interest in graph coloring and related problems in graph streams
~\cite{HalldorssonHLS16,CormodeDK18,AssadiCK19a,BehnezhadDHKS19,CormodeDK19,KonradPR019,BeraCG19,AlonA20,BhattacharyaBMU21}. Beside what already mentioned, another work related to ours is~\cite{AlonA20} that studied  
graph theoretic aspects of palette sparsification theorem of~\cite{AssadiCK19a} and obtained semi-streaming algorithms for coloring triangle-free graphs and $(\deg+1)$-coloring. Moreover,~\cite{BhattacharyaBMU21}  showed that 
some of the ``easiest'' problems in coloring are still intractable in the semi-streaming model (even with randomization). See also~\cite{McGregor14} for an excellent overview of work on other problems in the semi-streaming model.

        
        \definecolor{lightblue}{RGB}{207, 229, 255}
\definecolor{lightred}{RGB}{255, 207, 207}
\definecolor{lightgray}{RGB}{240, 240, 240}

\section{High-Level Overview}\label{sec:overview} 

We give a streamlined overview of our approach in this section.  We emphasize that
this section oversimplifies many details and the discussions will be informal for the sake of intuition.

\subsection{Lower Bound of~\Cref{res:lower}} 
As stated earlier, the proof of~\Cref{res:lower} is by considering the multi-party communication complexity of the coloring problem. To start, let us consider the simple case of \emph{two} players Alice and Bob, 
receiving edges of a graph $G$ with maximum degree $\Delta$. Alice sends a message $M$ to Bob and Bob outputs a proper coloring of $G$ using as small as possible number of colors. What is the best strategy of players for solving the problem 
with limited communication and small number of colors? 

Before getting to this question, let us make an important remark: Coloring with more than $\Delta$ colors is inherently a \emph{search} problem not a \emph{decision} one as all graphs can be colored with $\Delta+1$ colors after all. Thus, the above question is basically asking how much Bob should learn about Alice's input to agree on a proper coloring of the \emph{entire} graph (without knowing all edges of Alice). This view will be important throughout this discussion and our formal lower bound arguments. 

\paragraph{Two-player communication complexity of coloring.} There is a simple solution to our two-player communication game using $\approx n$ size messages and $O(\Delta^2)$ colors. Alice simply sends a $(\Delta+1)$ coloring of her input graph to Bob and Bob further finds a $(\Delta+1)$ coloring of 
each of Alice's color classes \emph{individually} to obtain a proper $(\Delta+1)^2$ coloring of the entire input graph. Let us show that this is essentially the best one can do using $O(n)$ size messages and for a specific 
choice of $\Delta = \Theta(\sqrt{n})$ (neither of these assumptions are needed in our main lower bound).  

Suppose Alice receives an arbitrary graph with maximum degree $\sqrt{n}$ and maps it to a message of size $O(n)$. As the graphs with maximum degree $\sqrt{n}$ are a 
\emph{constant fraction} of graphs with $(n^{3/2}/2)$ edges, we have that there is a message, to which, Alice is mapping at least 
\[
	\Omega(1) \cdot {{{n}\choose{2}}\choose{\frac{n^{3/2}}{2}}} \cdot 2^{-O(n)} \gtrsim \exp\paren{\dfrac{n^{3/2}}{4} \cdot \ln{(n)}} \cdot 2^{-O(n)}, 
\]
many different graphs. At the same time, given this message, Bob should avoid coloring any pairs of vertices the same if they appear in \emph{some} graph mapped to this message. 
But having so many graphs mapped to the same message only allows for $O(n^{3/2})$ pairs of vertices to not have any edge at all in \emph{any} of these graphs; this is because the total number of graphs 
with maximum degree $\sqrt{n}$ whose edges avoid a fixed set of $O(n^{3/2})$ pairs of vertices have size at most
\[
	{{{{n}\choose{2}}-O(n^{3/2})}\choose{\frac{n^{3/2}}{2}}} \lesssim {{{n}\choose{2}}\choose{\frac{n^{3/2}}{2}}} \cdot (1-\frac{1}{O(\sqrt{n})})^{(\frac{n^{3/2}}{2})}  \lesssim  \exp\paren{\dfrac{n^{3/2}}{4} \cdot \ln{(n)}} \cdot 2^{-O(n)}.
\]

At this point, this means that \emph{from the perspective of Bob}, only $O(n^{3/2})$ pairs of vertices can be colored the same, even ignoring his own input graph (see~\Cref{fig:two-player} for an illustration). Moreover, a Markov bound implies that half the 
vertices only have $O(\sqrt{n})$ non-edges from the perspective of Bob. Thus, Bob will ``see'' a set $S$ of $\Theta(n)$ vertices where each one has at most $O(\sqrt{n})$ non-edges inside $S$. But recall that we 
are considering the case where maximum degree can be as large as $\Theta(\sqrt{n})$. So Bob's own input can simply contain all non-edges inside $S$ while keeping the maximum degree of the graph still $O(\sqrt{n})$. 
At this point, the induced subgraph on vertices $S$, {from the perspective of Bob}, is simply a clique, and thus requires $\card{S} = \Omega(n)$ colors. 
Since $\Delta = \Theta(\sqrt{n})$, this gives us an $\Omega(\Delta^2)$ lower bound on the number of colors. 

\begin{figure}[t!]
	\begin{center}
	\subcaptionbox{Alice has to map several graphs to the same message. These graphs are \emph{individually} ``sparse'': they have \emph{max}-degree $\lesssim \sqrt{n}$. \label{fig:Alice}}%
 	 [.45\linewidth]{
 		\begin{tikzpicture}
			\node[ellipse, draw, minimum width=6cm, minimum height=3cm, fill=lightgray] (E) {}; 
			\node[ellipse, draw, rotate=45, minimum width=2cm, minimum height=1.5cm, fill=lightblue] (E1) [above left=-1cm and -1.5cm of E] {$G_1$}; 
			\node[ellipse, draw, rotate=-30, minimum width=2cm, minimum height=1.5cm, fill=ForestGreen!10] (E2) [above left=-2cm and -2.5cm of E] {$G_2$}; 
			\node[ellipse, draw, rotate=-15, minimum width=2cm, minimum height=1.5cm, fill=Yellow!25] (E3) [above right=-0.25cm and -0.15cm of E2] {$G_3$}; 
		\end{tikzpicture}
	}
  \hspace{0.4cm} 
  \subcaptionbox{Bob however ``sees'' all these edges as part of the input. So, from Bob's perspective, this subgraph is ``dense'': it has \emph{min}-degree $\gtrsim n-\sqrt{n}$. Thus, even a ``sparse'' input to Bob with max-degree $\lesssim \sqrt{n}$, 
  turns this subgraph into a clique.   \label{fig:Bob}}%
  [.45\linewidth]{
 		\begin{tikzpicture}
			\node[ellipse, draw, minimum width=6cm, minimum height=3cm, fill=lightgray] (E) {}; 

			\node[ellipse, draw=none, rotate=45, minimum width=2cm, minimum height=1.5cm, fill=lightblue] (E1) [above left=-1cm and -1.5cm of E] {}; 
			\node[ellipse, draw=none, rotate=-30, minimum width=2cm, minimum height=1.5cm, fill=lightblue] (E2) [above left=-2cm and -2.5cm of E] {}; 
			\node[ellipse, draw=none, rotate=-15, minimum width=2cm, minimum height=1.5cm, fill=lightblue] (E3) [above right=-0.25cm and -0.15cm of E2] {}; 
			
			\node[ellipse, draw=none, rotate=45, minimum width=2cm, minimum height=1.5cm, fill=black!25, pattern=north east lines] (E1) [above left=-1cm and -1.5cm of E] {}; 
			\node[ellipse, draw=none, rotate=-30, minimum width=2cm, minimum height=1.5cm, fill=ForestGreen!10, pattern=north east lines] (E2) [above left=-2cm and -2.5cm of E] {}; 
			\node[ellipse, draw=none, rotate=-15, minimum width=2cm, minimum height=1.5cm, fill=Yellow!25, pattern=north east lines] (E3) [above right=-0.25cm and -0.15cm of E2] {}; 
		\end{tikzpicture}
  }
	\end{center}
	\caption{An illustration of the two-player communication lower bound.}
	\label{fig:two-player}
\end{figure}

\paragraph{Multi-party communication complexity of coloring.} Given the protocol mentioned earlier for two players, to prove~\Cref{res:lower}, we need to consider a larger number of players. In general, the same strategy outlined above also implies a protocol for $k$ players with $O(n)$ communication per player and an $O(\Delta^{k})$ coloring. 
Our goal is to match this in our lower bound. 

Suppose now we have $k$ players $\PPi{1},\ldots,\PPi{k}$ and the input edges are partitioned between them. Let us again present a graph of maximum degree $\approx \Delta/k$ to the first player. We can again 
use a similar counting argument to bound the number of non-edges in inputs mapped to a message of player $\PPi{1}$ (assuming that it has size, say, $O(n)$).  
We would like to continue this procedure, by choosing the input graph of player $\PPi{2}$ in a way that ``destroys'' many of these pairs, while having maximum degree of still $\approx \Delta/k$; then recourse on the third player and so on. However, 
continuing the above counting argument directly seems intractable at this point. 

\begin{figure}[t!]
	\begin{center}
	\subcaptionbox{Player $\PPi{i}$'s different inputs that are mapped to the same message. The right (white) part are the vertices already removed from consideration and the left (dark) part are the 
	``dense'' subgraph of the input from the perspective of $\PPi{i}$. \label{fig:11}}%
 	 [.45\linewidth]{
 		\begin{tikzpicture}
			\node[rectangle, draw, minimum width=6cm, minimum height=3cm, fill=lightgray] (E) {}; 
			\node[rectangle, draw, minimum width=1cm, minimum height=2.7cm, fill=white] (E1) [above right=-2.85cm and -1.2cm of E]{}; 
			\node[rectangle, draw, minimum width=4.5cm, minimum height=2.7cm, fill=black!20] (E2) [left= 0.1cm of E1]{}; 
			
			\node[ellipse, draw, rotate=45, minimum width=1.75cm, minimum height=1.25cm, fill=lightblue] (X1) [above left=-1cm and -1.75cm of E2] {$G_1$}; 
			\node[ellipse, draw, rotate=-30, minimum width=1.75cm, minimum height=1.25cm, fill=ForestGreen!10] (X2) [above left=-2.5cm and -2.5cm of E2] {$G_2$}; 
			\node[ellipse, draw, rotate=15, minimum width=1.75cm, minimum height=1.25cm, fill=Yellow!25] (X3) [above right=0.25cm and -0.15cm of X2] {$G_3$}; 
		\end{tikzpicture}
	}
  \hspace{0.4cm} 
  	\subcaptionbox{For player $\PPi{i+1}$, the left (dark) part ``looks'' even more ``dense'' than it was for player $\PPi{i}$, as multiple different graphs of $\PPi{i}$'s input are mapped to the same message. \label{fig:22}}%
 	 [.45\linewidth]{
 		\begin{tikzpicture}
			\node[rectangle, draw, minimum width=6cm, minimum height=3cm, fill=lightgray] (E) {}; 
			\node[rectangle, draw, minimum width=1cm, minimum height=2.7cm, fill=white] (E1) [above right=-2.85cm and -1.2cm of E]{}; 
			\node[rectangle, draw, minimum width=4.5cm, minimum height=2.7cm, fill=black!20] (E2) [left= 0.1cm of E1]{}; 
			
			\node[ellipse, rotate=45, minimum width=1.75cm, minimum height=1.25cm, pattern=north east lines] (X1) [above left=-1cm and -1.75cm of E2] {}; 
			\node[ellipse, rotate=-30, minimum width=1.75cm, minimum height=1.25cm, pattern=north east lines] (X2) [above left=-2.5cm and -2.5cm of E2] {}; 
			\node[ellipse, rotate=15, minimum width=1.75cm, minimum height=1.25cm,  pattern=north east lines] (X3) [above right=0.25cm and -0.15cm of X2] {}; 
		\end{tikzpicture}
	}
	
  \vspace{1cm} 
  	\subcaptionbox{We further remove ``less dense'' part of the input (middle layer) and provide the inputs of $\PPi{i+1}$ inside the remaining subgraph. \label{fig:33}}%
 	 [.45\linewidth]{
  		\begin{tikzpicture}
			\node[rectangle, draw, minimum width=6cm, minimum height=3cm, fill=lightgray] (E) {}; 
			\node[rectangle, draw, minimum width=1cm, minimum height=2.7cm, fill=white] (E1) [above right=-2.85cm and -1.2cm of E]{}; 
			\node[rectangle, draw, minimum width=1cm, minimum height=2.7cm, fill=black!20] (E2) [left= 0.1cm of E1]{}; 
			\node[rectangle, draw, minimum width=3.4cm, minimum height=2.7cm, fill=black!35] (E3) [left= 0.1cm of E2]{}; 		
		
				\node[ellipse, draw, rotate=45, minimum width=1.75cm, minimum height=1.25cm, fill=lightblue] (X1) [above left=-1cm and -1.75cm of E3] {$G_1$}; 
			\node[ellipse, draw, rotate=0, minimum width=1.75cm, minimum height=1.25cm, fill=ForestGreen!10] (X2) [above left=-2.5cm and -2.5cm of E3] {$G_2$}; 
			\node[ellipse, draw, rotate=15, minimum width=1.75cm, minimum height=1.25cm, fill=Yellow!25] (X3) [above right=0.25cm and -0.5cm of X2] {$G_3$}; 
	
			\end{tikzpicture}
	}
  \hspace{0.4cm} 
  \subcaptionbox{We continue like this until the last player; at that point, the remaining ``super dense'' part of the input (left most part) from the perspective of $\PPi{k}$ is simply a clique.   \label{fig:44}}%
  [.45\linewidth]{
 		\begin{tikzpicture}
  	
			\node[rectangle, draw, minimum width=6cm, minimum height=3cm, fill=lightgray] (E) {}; 
			\node[rectangle, draw, minimum width=1cm, minimum height=2.7cm, fill=white] (E1) [above right=-2.85cm and -1.2cm of E]{}; 
			\node[rectangle, draw, minimum width=1cm, minimum height=2.7cm, fill=black!20] (E2) [left= 0.1cm of E1]{}; 
			\node[rectangle, draw, minimum width=1cm, minimum height=2.7cm, fill=black!35] (E3) [left= 0.1cm of E2]{}; 		
			\node[rectangle, draw, minimum width=1cm, minimum height=2.7cm, fill=black!50] (E4) [left= 0.1cm of E3]{}; 		
			\node[rectangle, draw, minimum width=1.2cm, minimum height=2.7cm, fill=black!55] (E5) [left= 0.1cm of E4]{}; 		
			\node[rectangle, draw, minimum width=1.2cm, minimum height=2.7cm, pattern=north east lines] (E6) [left= 0.1cm of E4]{}; 		
		
			\end{tikzpicture}
  }
	\end{center}
	\caption{An illustration of the multi-player communication lower bound.}
	\label{fig:multi-player}
\end{figure}

It turns out however that there is an easier way to implement this strategy by providing the input of players as \emph{random subgraphs}. Specifically, the process goes as follows (see~\Cref{fig:multi-player}):
\begin{itemize}[leftmargin=10pt]
	\item We present the first player $\PPi{1}$ with a random Erd\H{o}s-R\'enyi graph with probability $\approx (\Delta/kn)$ for each edge (so max-degree $\approx \Delta/k$ with high probability). 
	We prove that (see our \textbf{Compression Lemma} below) that there is some message $M_1$ of $\PPi{1}$ that creates $\lesssim k \cdot n^2/\Delta$ non-edges from the perspective of remaining players. 
	We further \emph{remove} all vertices with non-edge-degree $\gtrsim k^2 n/\Delta$ which by Markov bound are only $\lesssim n/k$. 
	
	\item To player $\PPi{2}$, we give a \emph{random subgraph} of (remaining) non-edges left by $M_1$ where each edge appears with probability $\approx (\Delta^2/k^3n)$ now. 
	By the bound of $\lesssim k^2 n/\Delta$ on the non-edge-degree of remaining vertices, it is easy to see that the input given to $\PPi{2}$ \emph{still} has max-degree $\approx \Delta/k$ with high probability. We again
	use the \textbf{Compression Lemma} to find a message $M_2$ of $\PPi{2}$ that creates $\lesssim k^3  n^2/\Delta^2$ non-edges from the perspective of subsequent players, and continue. 
	This way, each step to the next player will remove $\lesssim n/k$ vertices while reducing non-edge-degree of remaining vertices by a $\gtrsim \Delta/k^2$ factor. 
	
	\item Eventually, we will be able to give a {random subgraph} of non-edges left by $M_1,\ldots,M_{k-1}$ to the player $\PPi{k}$ with edge  probability $\approx (\Delta^k/k^{2k} n)$, and bound 
	the \emph{total} maximum-degree of the graph by $k \cdot \Delta/k = \Delta$ as desired. 
	But if we assume that $(\Delta/k^2)^k \approx n$ (again, this assumption is only for simplicity of exposition here), 
	it means that we turned the \emph{remaining} vertices of the graph, \emph{from the perspective of $\PPi{k}$}, into a \emph{clique} entirely\footnote{We emphasize that this clique is not part of a single input graph, but rather is a \emph{union} of various inputs, which are all consistent with the view of player $\PPi{k}$ based on the input and messages received.}. Moreover, since we only removed $\lesssim n/k$ vertices for each player, 
	we still have $\approx n/k$ vertices left in this clique. Thus, the number of colors needed by $\PPi{k}$ to color this clique is $\approx n/k \gtrsim (\Delta/k^3)^{k}$ (which is larger than $\poly(\Delta)$ for sufficiently large $k$).  
\end{itemize}
\noindent
Finally, we  also state our compression lemma that is used to find the messages $M_1,\ldots,M_{k-1}$ that create ``small'' number of non-edges in the above discussion. 
\begin{itemize}[leftmargin=10pt]
	\item \textbf{Compression Lemma:} Let $H$ be any arbitrary graph and consider a distribution over subgraphs of $H$ obtained by sampling each edge with probability $p$. Any compression scheme 
	that maps the graphs sampled from this distribution into $s$-bit summaries will create a summary so that at most $O(s/p)$ edges are \emph{missing} from \emph{all} graphs mapped to this summary. 
\end{itemize}
This bound should be contrasted with more standard compression arguments that in the same setting, prove that $O(s \cdot \log^{-1}\!{(1/p)})$ \emph{edges} exist in \emph{all} graphs mapped to the summary. 
The proof is  a simple exercise in random graph theory plus showing that an $s$-bit compression cannot ``capture'' events that happen with probability $<2^{-s}$ in the input distribution. 
This concludes the description of our lower bound approach for establishing~\Cref{res:lower}. 

\subsection{Algorithms of~\Cref{res:upper}}

We now turn to our algorithmic results for multi-pass semi-streaming algorithms for graph coloring.

\paragraph{$O(\Delta^2)$ coloring in two passes.} The key ingredient of this algorithm is the following family of coloring functions for any integers $n,\Delta \geq 1$: 
\begin{itemize}
	\item $\CC(n,\Delta)$: there are $O(n)$ functions $C: V \rightarrow [\Delta]$ in the family so that given any $n$-vertex graph $G=(V,E)$ with max-degree $\Delta$, there is \emph{some} function $C$ in the family such that assigning color $C(v)$ to each vertex $v$ 
	only creates $O(n)$ monochromatic edges. Moreover, each of these functions can be implicitly stored in $O(\log{n})$ bits. 
\end{itemize}

The proof of existence of this family is via probabilistic method by choosing these functions to be near-universal hash functions and a simple probabilistic analysis. 

Now, consider the following simple two-pass algorithm. In the first pass, maintain $O(n)$ counters on the number of monochromatic edges of $G$ for each of the functions $C \in \CC(n,\Delta)$: the counter for function $C$ simply needs 
to add one for each edge $(u,v)$ appearing in the stream with $C(u) = C(v)$. This only requires $O(n)$ space. Given that we already know at least one of these counters only count up to $O(n)$ by the guarantee of 
$\CC(n,\Delta)$, we will use the function $C$ of that counter and \emph{store} all monochromatic edges of $G$ under $C$. Given that $G$ had maximum-degree $\Delta$, these monochromatic edges under $C$ can themselves 
be properly colored using $(\Delta+1)$ colors. Taking the \emph{product} of these two colorings then will give us an $O(\Delta^2)$ coloring as desired. 

\paragraph{$O(\Delta)$ coloring in $O(\log{\Delta})$ passes.} The idea behind this algorithm is to \emph{gradually} grow a coloring of $G$ over multiple passes, using an extension of the ideas in the previous algorithm. 
For this, we need another family of coloring functions for integers $n,\Delta$: 
\begin{itemize}
	\item $\CCstar(n,\Delta)$: there are $O(n)$ functions $C: V \rightarrow [O(\Delta)]$ in the family so that given any $n$-vertex graph $G=(V,E)$ with max-degree $\Delta$ and any partial (valid) coloring $C_0$ of some \emph{subset} of vertices, 
	there is \emph{some} function $C$ in the family such that assigning color $C(v)$ to every vertex $v$ \emph{uncolored} by $C_0$ only creates $o(n_0)$ monochromatic edges, where $n_0$ is the number of uncolored vertices by $C_0$. Moreover, each function can be implicitly stored in $O(\log{n})$ bits. 
\end{itemize}
The proof of existence of this family is again via probabilistic arguments although it needs a more detailed analysis. 

The algorithm is then as follows. We start with a coloring $C_0$ that leaves all vertices uncolored. Then, iteratively, we first make one pass and use $O(n)$ counters to find a desired coloring function $C \in \CCstar(n,\Delta)$ as specified by the above 
result; in the second pass we pick all $o(n_0)$ monochromatic edges of this coloring with respect to $C_0$. This allows us to color $(1-o(1))$ fraction of uncolored vertices of $C_0$ by $C$ without creating \emph{any} monochromatic edges. 
We continue this for $O(\log{\Delta})$ iterations so that $C_0$ only leaves $O(n/\Delta)$ vertices uncolored. We make one final pass over the input and store all $O(n)$ edges incident on these remaining vertices and then at the end, 
simply color them greedily using $(\Delta+1)$ colors (as \emph{any} partial coloring can be extended to a $(\Delta+1)$ coloring greedily). This gives our $O(\Delta)$ coloring algorithm. 

We conclude this part by noting that even though both our algorithms turn out quite simple, their design, based on families $\CC(n,\Delta)$ and $\CCstar(n,\Delta)$, requires a careful consideration to ensure one can also \emph{verify} the guarantees of respective families in \emph{limited space}\footnote{For instance, a ``more standard'' guarantee instead of $\CC(n,\Delta)$ that bounds the \emph{maximum-degree} of monochromatic edges can also be obtained via pair-wise independent hash functions (see, e.g.~\cite{CzumajDP20,BeraCG19}); but then that would 
require $\Theta(n)$ space \emph{per} each function to verify whether or not the function satisfies the desired property.}.


\section{Preliminaries}\label{sec:prelim}

\paragraph{Notation.}  For an integer $t \geq 1$, we define $[t] := \set{1,2,\ldots,t}$. For a tuple $(X_1,\ldots,X_t)$ and any $i \in [t]$, we define $X_{<i} := (X_1,\ldots,X_{i-1})$. For a distribution $\mu$, $\supp{\mu}$ denotes the support of $\mu$.

For a graph $G=(V,E)$, we use $\Delta(G)$ to denote the maximum degree of $G$. For a vertex $v \in V$, we use $N(v)$ to denote the neighbors of $v$ in $G$. For any integer $c \geq 1$, 
a \emph{$c$-coloring function} is \emph{any} function $C: V \rightarrow [c]$ and does \emph{not} necessarily need to be a proper coloring of $G$. A monochromatic 
edge under $C$ is any edge $(u,v)$ of $G$ with $C(u) = C(v)$.  
We further define a \emph{partial $c$-coloring function} as any function $C: V \rightarrow [c] \cup \set{\perp}$; we refer to vertices $v$ with $C(v) = \perp$ as uncolored vertices and \emph{not} consider edges 
$(u,v)$ with $C(u) = C(v) = \perp$ as monochromatic edges. 

\medskip

We use the following standard form of Chernoff bound in our proofs. 
 \begin{proposition}[Chernoff bound; c.f.~\cite{DubhashiP09}]\label{prop:chernoff}
 	Suppose $X_1,\ldots,X_m$ are $m$ independent random variables with range $[0,1]$ each. Let $X := \sum_{i=1}^m X_i$ and $\mu_L \leq \expect{X} \leq \mu_H$. Then, for any $\eps > 0$, 
 	\[
 	\Pr\paren{X >  (1+\eps) \cdot \mu_H} \leq \exp\paren{-\frac{\eps^2 \cdot \mu_H}{3+\eps}} \quad \textnormal{and} \quad \Pr\paren{X <  (1-\eps) \cdot \mu_L} \leq \exp\paren{-\frac{\eps^2 \cdot \mu_L}{2+\eps}}.
 	\]
 \end{proposition}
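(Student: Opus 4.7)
The plan is to apply the classical exponential-moment (Chernoff) argument to each tail separately. For the upper tail I would fix a parameter $t > 0$ and apply Markov's inequality to the nonnegative random variable $e^{tX}$, obtaining
$$\Pr\paren{X > (1+\eps)\mu_H} \;\leq\; e^{-t(1+\eps)\mu_H} \cdot \E\bracket{e^{tX}}.$$
Independence of the $X_i$ factorizes the right-hand side as $\prod_{i=1}^{m} \E[e^{tX_i}]$, so it suffices to bound each individual moment-generating function.

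For each $X_i$ supported in $[0,1]$, I would use the convexity bound $e^{tx} \leq 1 + (e^t-1)x$ on $[0,1]$, take expectations, and apply $1+y \leq e^y$ to get $\E[e^{tX_i}] \leq \exp((e^t-1)\,\E[X_i])$. Multiplying over $i$ and invoking the hypothesis $\E[X] \leq \mu_H$ yields $\E[e^{tX}] \leq \exp((e^t-1)\mu_H)$. Substituting back and choosing the optimizing value $t = \ln(1+\eps)$ gives the standard expression
$$\Pr\paren{X > (1+\eps)\mu_H} \;\leq\; \exp\Paren{\mu_H \cdot (\eps - (1+\eps)\ln(1+\eps))}.$$

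The only non-mechanical step is the elementary calculus inequality $(1+\eps)\ln(1+\eps) - \eps \geq \eps^2/(3+\eps)$ for $\eps > 0$, which I expect to be the main (but still routine) obstacle. I would verify it by noting that both sides and their first derivatives vanish at $\eps = 0$ and that an appropriate second-derivative comparison pins down the sign throughout; substituting this inequality into the exponent produces the stated upper-tail bound. The lower-tail bound follows by a symmetric argument: apply Markov to $e^{-tX}$ with $t > 0$, bound $\E[e^{-tX_i}] \leq \exp(-(1-e^{-t})\E[X_i])$ by convexity, invoke $\E[X] \geq \mu_L$, and optimize by $t = -\ln(1-\eps)$, finally invoking the companion inequality $(1-\eps)\ln(1-\eps) + \eps \geq \eps^2/(2+\eps)$ for $\eps \in (0,1)$. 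The slightly sharper denominator $2+\eps$ versus $3+\eps$ tracks the asymmetry between $1-e^{-t}$ and $e^t - 1$ and accounts for the differing constants in the two stated bounds. Since the proposition is cited from the textbook \cite{DubhashiP09}, in the paper it suffices to invoke it as a black box.
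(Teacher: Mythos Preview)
Your proof sketch is the standard exponential-moment argument and is correct; the paper itself does not prove this proposition at all, merely citing it from~\cite{DubhashiP09} as a black box (as you yourself note in your final sentence). There is nothing to compare.
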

 
 \medskip
 
 Finally, we use the following basic property of any proper coloring, in creating many pairs of vertices which are colored the same. The proof is standard and is presented for completeness. 
\begin{proposition}\label{prop:C-coloring}
	In any proper $c$-coloring of a graph $G=(V,E)$ for $c \leq \frac{n}{2}$, there are at least $\dfrac{n^2}{4c}$ pairs of vertices that are colored the same. 
\end{proposition}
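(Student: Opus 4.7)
The plan is to use the partition of $V$ induced by the coloring together with a convexity argument. Let $V_1, \ldots, V_c$ denote the color classes of the proper $c$-coloring, with sizes $n_i := |V_i|$ for $i \in [c]$, so $\sum_{i=1}^c n_i = n$. Because any two vertices sharing the same color lie in the same $V_i$, the total number of same-color pairs equals exactly $\sum_{i=1}^{c} \binom{n_i}{2} = \tfrac{1}{2}\bigl(\sum_i n_i^2 - n\bigr)$.

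Next, I would lower bound $\sum_i n_i^2$ by the Cauchy--Schwarz inequality (or equivalently the power-mean / convexity of $x \mapsto x^2$): $\sum_i n_i^2 \geq \frac{(\sum_i n_i)^2}{c} = \frac{n^2}{c}$. Substituting back yields
\[
\sum_{i=1}^{c} \binom{n_i}{2} \;\geq\; \frac{1}{2}\Paren{\frac{n^2}{c} - n} \;=\; \frac{n^2}{2c} - \frac{n}{2}.
\]

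Finally, I would use the hypothesis $c \leq n/2$, equivalently $n \geq 2c$, which gives $\frac{n}{2} \leq \frac{n^2}{4c}$, and therefore
\[
\frac{n^2}{2c} - \frac{n}{2} \;\geq\; \frac{n^2}{2c} - \frac{n^2}{4c} \;=\; \frac{n^2}{4c},
\]
which is exactly the claimed bound. There is no real obstacle here; the only thing to be careful about is ensuring the constant $\tfrac{1}{4}$ (rather than $\tfrac{1}{2}$) drops out correctly, and this is precisely why the hypothesis $c \le n/2$ is invoked at the final step to absorb the lower-order $-n/2$ term.
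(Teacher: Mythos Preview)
Your proof is correct and follows essentially the same approach as the paper's proof: both compute the same-color pairs as $\tfrac{1}{2}\bigl(\sum_i n_i^2 - n\bigr)$, apply the convexity/Cauchy--Schwarz bound $\sum_i n_i^2 \ge n^2/c$, and then use the hypothesis $c \le n/2$ to absorb the $-n/2$ term and obtain the constant $\tfrac{1}{4}$.
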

\begin{proof}
	For any $i \in [c]$, let $n_i$ vertices denote the number of vertices colored $i$ in the given $c$-coloring. 
	Since this is a proper coloring, we have, 
	\begin{align*}
		\text{number of pairs colored the same} &= \sum_{i=1}^{c} \binom{n_i}{2} = \frac{1}{2} \cdot \paren{\sum_{i=1}^{c} n_i^2 - n_i}\\ 
		&= \frac{1}{2} \cdot \paren{\sum_{i=1}^{c} n_i^2} - \frac{1}{2} \cdot n \tag{as $\sum_{i=1}^{c} n_i = n$ since all vertices are colored} \\
		&\geq \frac{1}{2} \cdot \paren{c \cdot (\frac{n}{c})^2 - n} \tag{as sum of quadratic-terms is minimized when they are all equal}  
		 \ge \frac{n^2}{4c},
	\end{align*}
	where the last inequality is by the assumption $c \leq n/2$. This concludes the proof. \Qed{prop:C-coloring} 
	
\end{proof}


\section{The Lower Bound}\label{sec:high-level}

We present our lower bound in this section and formalize~\Cref{res:lower}. We start by introducing a key tool used in our lower bound regarding a family of random graphs and its key compression aspect for our purpose. 
We then define the communication game we use in proving~\Cref{res:lower} formally, and next present the proof the communication lower bound.


\subsection{A Random Graph Distribution and its Compression Aspects}\label{sec:random}

We introduce a basic random graph distribution in this subsection that forms an important component of the analysis of our lower bound. 
The key difference of our distribution from standard random graph models is that it generates random subgraphs of \emph{arbitrary (base) graphs}, as opposed to subgraphs of cliques (which means some edges 
may never appear in the support of this distribution if they are not part of the base graph). The other change is that we ensure a \emph{deterministic} bound on the maximum degree of the graphs sampled from this distribution. 

\begin{Definition}\label{def:rg}
	For a \textbf{base graph} $\Gbase = (V,\Ebase)$ and parameters $p \in (0,1), d \geq 1$, we define the {\textbf{random graph distribution}} $\GR := \GR(\Gbase,p,d)$ as follows:
	\begin{enumerate}[label=$(\roman*)$,leftmargin=20pt]
		\item\label{line:rg1} Sample a graph $G$ on  vertices $V$ and edges $E$ by picking each edge of $\Ebase$ independently and with probability $p$ in $E$; 
		\item Return $G$ if $\Delta(G) < 2p \cdot d$, and otherwise repeat the process.
	\end{enumerate}
\end{Definition}

We will eventually set $d$ to be approximiately $\Delta(\Gbase)$. Since the average degree of a vertex is at most $p \cdot \Delta(\Gbase) \approx p \cdot d$, the second condition of~\Cref{def:rg} rarely kicks in by Chernoff bound, and thus this distribution is basically sampling random subgraphs of $\Gbase$. We will make these statements more precise in the proof of \Cref{clm:GRp}.

We now consider algorithms that aim to ``compress'' graphs sampled from $\GR$. 

\begin{Definition}\label{def:ca}
	Consider $\GR(\Gbase,p,d)$  for a base graph $\Gbase=(V,\Ebase)$ and parameters $p \in (0, 1), d \geq 1$, and an integer $s \geq 1$. 
	A {\textbf{compression algorithm}} with {{size}} $s$ is any function $\Phi: \supp{\GR} \rightarrow \set{0,1}^s$ that maps graphs sampled from $\GR$ 
	into $s$-bit strings. For any graph $G \in \supp{\GR}$, we refer to $\Phi(G)$ as the {\textbf{summary}} of $G$. For any summary $\phi \in \set{0,1}^s$, we define: 
	\begin{itemize}[leftmargin=20pt]
		\item $\GR_{\phi}$ as the distribution of graphs mapped to $\phi$ by $\Phi$, i.e., $\GR_{\phi} := G \sim \GR \mid \Phi(G) = \phi$. 
		\item $\Gmiss(\phi)=(V,\Emiss(\phi))$, called the {\textbf{missing graph}} of $\phi$, as a graph on vertices $V$ and edges missed by \emph{all} graphs in $\GR_{\phi}$, i.e., 
		\[
			\Emiss(\phi) := \set{(u,v) \in \Ebase \mid \text{no graph in $\supp{\GR_{\phi}}$ contains the edge $(u,v)$}}. 
		\]
	\end{itemize}
\end{Definition}

We use the graphs from distribution $\GR$ (for different base graphs and probability parameters) in the design of our lower bounds. The compression algorithms in~\Cref{def:ca} then correspond to streaming algorithms that compress these graphs into their $s$-bit memory.  

The notion of a missing graph is  particularly useful for us, as from the perspective the streaming algorithm, only pairs of vertices with an edge in the missing graph are known to \emph{not} have an edge in the original input. This implies that these are the only pairs of vertices that can be monochromatic in the final coloring without violating the correctness of the algorithm on \emph{some} input. 

The following lemma summarizes the main property of compression algorithms for our random graph distribution required in our main proof. Roughly speaking, it states that the missing graph of a ``small-size'' compression algorithm 
cannot have ``many'' edges\footnote{An intuition about the bounds of this lemma: given a graph $G$ with maximum degree $\Delta$, an $O(n\log{n})$-bit compression can  ensure the \emph{presence} of $O(n)$ edges in all graphs mapped 
to a fixed summary by storing these edges explicitly. However, it can also ensure \emph{absence} of up to $O(n^2/\Delta)$ edges from all graphs mapped to a summary by instead storing a $(\Delta+1)$ coloring of $G$. Our lower bound 
in this lemma focuses on the latter type of bounds and prove they are (almost) tight (for this specific instantiation, think of $s \approx n$, $p \approx \Delta/n$, and a clique for base graph).}.

\begin{lemma}\label{lem:rg}
	Let $\Gbase = (V,\Ebase)$ be an $n$-vertex graph, $s \geq 1$ be an integer, and $p \in (0,1)$ and $d \ge 1$ be parameters such that $d \ge \max\{\Delta(\Gbase), 4\ln{(2n)/p}\}$.  
	Consider the distribution $\GR := \GR(\Gbase,p,d)$ and 
	suppose $\Phi: \supp{\GR} \rightarrow \set{0,1}^s$ is a compression algorithm of size $s$ for $\GR$. Then, there exist a summary $\phistar \in \set{0,1}^s$ such that 
	in the missing graph of $\phistar$, we have 
	\[
		\card{\Emiss(\phistar)} \leq \frac{\ln{2} \cdot (s+1)}{p}. 
	\]
\end{lemma}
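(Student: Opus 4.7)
The plan is to prove \Cref{lem:rg} via a two-step argument: first, a pigeonhole step guaranteeing the existence of a ``heavy'' summary $\phistar$, and then a random graph argument showing that the probability of this summary upper-bounds the probability that a random graph from $\GR$ avoids all the missing edges of $\phistar$.

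First I would note that since $\Phi$ maps $\supp{\GR}$ into $\{0,1\}^s$, a pigeonhole argument on probability mass gives some summary $\phistar \in \{0,1\}^s$ with $\Pr_{G \sim \GR}[\Phi(G) = \phistar] \geq 2^{-s}$. Then I would exploit the defining property of the missing graph: for every $e \in \Emiss(\phistar)$, no $G \in \supp{\GR_{\phistar}}$ contains $e$, which means $\Pr_{G \sim \GR}[\Phi(G) = \phistar \text{ and } e \in G] = 0$ for each such $e$. Taking a union over $e \in \Emiss(\phistar)$, this yields
\[
\Pr_{G \sim \GR}[\Phi(G) = \phistar] \;\leq\; \Pr_{G \sim \GR}\bigl[\,G \cap \Emiss(\phistar) = \emptyset\,\bigr].
\]

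Next I would upper bound the right-hand side using the structure of $\GR$. Let $\GR'$ be the unconditioned distribution where each edge of $\Ebase$ is included independently with probability $p$ (so $\GR$ is $\GR'$ conditioned on the event $\mathcal{D}$ that $\Delta(G) < 2pd$). Under $\GR'$, the edges are independent and $\Pr_{\GR'}[G \cap \Emiss(\phistar) = \emptyset] = (1-p)^{|\Emiss(\phistar)|} \leq e^{-p \cdot |\Emiss(\phistar)|}$. To transfer this bound to $\GR$, I would apply \Cref{prop:chernoff} to the degree of each vertex in $G \sim \GR'$ (mean at most $p \cdot \Delta(\Gbase) \leq pd$), using $d \geq 4\ln(2n)/p$ to get $\Pr[\deg(v) \geq 2pd] \leq \exp(-pd/4) \leq 1/(2n)$, and then union bound over $v \in V$ to obtain $\Pr_{\GR'}[\mathcal{D}] \geq 1/2$. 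Consequently $\Pr_{\GR}[A] \leq 2\Pr_{\GR'}[A]$ for any event $A$, so
\[
\Pr_{G \sim \GR}\bigl[G \cap \Emiss(\phistar) = \emptyset\bigr] \;\leq\; 2\,e^{-p \cdot |\Emiss(\phistar)|}.
\]

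Finally, chaining the inequalities gives $2^{-s} \leq 2 e^{-p \cdot |\Emiss(\phistar)|}$, which upon taking logarithms yields $|\Emiss(\phistar)| \leq (s+1)\ln 2 / p$, as desired. The only real obstacle is justifying that the missing-edge bound carries through the degree conditioning in $\GR$, but this is handled cleanly by the Chernoff-plus-union-bound argument above, which only costs a factor of $2$ (absorbed into the ``$+1$'' in the final bound). Everything else is a direct counting/pigeonhole calculation.
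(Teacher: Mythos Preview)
Your proposal is correct and follows essentially the same approach as the paper: pigeonhole to find a heavy summary $\phistar$, bound $\Pr_{\GR}[\Phi(G)=\phistar]$ by the probability of avoiding all edges in $\Emiss(\phistar)$, transfer to the unconditioned distribution via a Chernoff-plus-union-bound argument giving the factor of $2$, and then chain the inequalities. The only difference is the order of presentation (you do the pigeonhole step first, the paper does it last), which is purely cosmetic.
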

\begin{proof}
	Define the distribution $\GRp$ as the distribution of graphs in Line~\ref{line:rg1} of~\Cref{def:rg} (i.e., without the check on max-degree and re-sampling step). This way, we have 
	\begin{align}
		\GR = \paren{G \sim \GRp \mid \Delta(G) < 2p \cdot d}. \label{eq:rg1}
	\end{align}
	We shall use this view in the following for bounding the probabilities of certain events. In particular, we have the following simple claim. This is because the
	graphs sampled from $\GRp$ already satisfy the conditioned event above with high enough probability. 
	
	\begin{claim}\label{clm:GRp}
		For any event $\event$, 
		$
			\Pr_{\GR}\paren{\event} \leq 2 \cdot \Pr_{\GRp}\paren{\event}. 
		$
	\end{claim}
	\begin{proof}
		Fix any vertex $u \in V$, and for any $v \in N(u)$ (in $\Gbase$), define an indicator random variable $X_{uv} \in \set{0,1}$ which is $1$ iff the edge $(u,v)$ is sampled in $\GRp$. Let $X_u := \sum_{v \in N(u)} X_{uv}$
		which will be equal to the degree of $u$ in the sampled graph of $\GRp$. 
		
		We have that $\expect{X_u}$ is the expected degree of $u$, which is at most $p \cdot \Delta(\Gbase) \leq p \cdot d$ as 
		we sample each neighbor  of $u$ with probability $p$. Since $X_u$ is a sum of independent random variables, by Chernoff bound (\Cref{prop:chernoff} with $\varepsilon = 1$ and $\mu_H = p \cdot d$), 
		\[
			\Pr_{\GRp}\paren{X_u \geq 2p \cdot d} \leq \exp\paren{-\frac{p \cdot d}{4}} \leq \exp\paren{-\ln(2n)} = \frac{1}{2n},
		\]
		by the promise of the lemma statement that $d \ge 4 \ln(2n)/p$. A union bound on all $n$ vertices ensures that
		\[
		\Pr_{\GRp}\paren{\Delta(G) \geq 2p \cdot d} \leq n \cdot \frac{1}{2n} = \frac12.
		\]
		We can now conclude by~\Cref{eq:rg1} that for any event $\event$, 
		\begin{align*}
			\Pr_{\GR}\paren{\event} &= \Pr_{\GRp}\paren{\event \mid \Delta(G) < 2p \cdot d} = \frac{\Pr_{\GRp}\paren{\event ~\text{and}~ \Delta(G) < 2p \cdot d}}{\Pr_{\GRp}\paren{\Delta(G) < 2p \cdot d}} 
			\leq 2 \cdot \Pr_{\GRp}\paren{\event}, 
		\end{align*}
		as desired. \Qed{clm:GRp} 
		
	\end{proof}
	\noindent
	For any summary $\phi \in \set{0,1}^s$, its distribution $\GR_{\phi}$, and its missing graph $\Gmiss(\phi)$,  
	\begin{align}
		\Pr_{\GR}\paren{\text{$G$ is sampled from $\GR_{\phi}$}} \leq \Pr_{\GR}\paren{\text{no edge of $\Gmiss(\phi)$ is sampled in $G$}}, \label{eq:rg2} 
	\end{align}
	because edges in $\Gmiss(\phi)$ cannot belong to the graphs in the support of $\GR_{\phi}$ by~\Cref{def:ca}. We can bound the RHS of~\Cref{eq:rg2} 
	using the distribution $\GRp$ and apply~\Cref{clm:GRp} to get the result for $\GR$ also. By the independence in the choice of edges in $\GRp$, we have, 
	\begin{align*}
		\Pr_{\GRp}\paren{\text{no edge of $\Gmiss(\phi)$ is sampled in $G$}} = \hspace{-0.4cm}\prod_{e \in \Emiss(\phi)} \hspace{-0.4cm} (1-p) = (1-p)^{\card{\Emiss(\phi)}} \leq \exp\!\Paren{\!\!-p \cdot \card{\Emiss(\phi)}\!}. 
	\end{align*}
	Thus, combined by~\Cref{clm:GRp} and~\Cref{eq:rg2}, for any summary $\phi \in \set{0,1}^s$, we have, 
	\begin{align}
	\Pr_{\GR}\paren{\text{$G$ is sampled from $\GR_{\phi}$}} \leq 2 \cdot \exp\!\Paren{\!\!-p \cdot \card{\Emiss(\phi)}\!}. \label{eq:rg3} 
	\end{align}
	
	We now switch to lower bound the LHS of~\Cref{eq:rg3} instead. Since $\Phi$ maps each graph sampled from $\GR$ to one of $2^{s}$ messages $\phi \in \set{0,1}^s$, we have, 
	\[
		\sum_{\phi \in \set{0,1}^s} \Pr_{\GR}\paren{\text{$G$ is sampled from $\GR_{\phi}$}} = 1, 
	\] 
	which means that there exist some $\phistar \in \set{0,1}^s$ such that 
	\[
		\Pr_{\GR}\paren{\text{$G$ is sampled from $\GR_{\phistar}$}} \geq 2^{-s}. 
	\]
	Combining this with~\Cref{eq:rg3}, we have that 
	\[
		\exp\paren{-s \cdot \ln{2}} \leq \exp\!\Paren{\ln{2}-p \cdot \card{\Emiss(\phistar)}}, 
	\]
	which implies
	\[
		\card{\Emiss(\phistar)} \leq \frac{\ln{2} \cdot (s+1)}{p},
	\]
	concluding the proof. \Qed{lem:rg}
	
\end{proof}

\subsection{The Coloring Communication Game}\label{sec:cc}  
We prove our lower bound in~\Cref{res:lower} via communication complexity arguments in the following communication game. (The setting of this game is called the number-in-hand multi-party communication complexity with shared blackboard in the literature). 

\begin{Definition}\label{def:cc-coloring}
	For integers $n,\Delta,k \geq 1$, the $\coloring(n,\Delta,k)$ game is defined as:
	\begin{enumerate}[label=$\roman*).$, leftmargin=25pt]
		\item There are $k$ players $\PPi{1},\ldots,\PPi{k}$. Each player $\PPi{i}$ knows the vertex set $V$ and receives a set $\Ei{i}$ of edges. Letting $G = (V, E)$ where $E = \Ei{1} \sqcup \cdots \sqcup \Ei{k}$, players are guaranteed that on every input $\Delta(G) \leq \Delta$ and their goal is to output a proper coloring of $G$. 
		\item The communication is done using a shared blackboard. First player $\PPi{1}$ writes a message $\Mi{1}$ based on $\Ei{1}$ on the shared blackboard which will be visible to \emph{all} subsequent players. Then, 
		player $\PPi{2}$ writes the next message $\Mi{2}$ based on $\Ei{2}$	and $\Mi{1}$. The players continue like this until $\PPi{k}$ writes the last message $\Mi{k}$ which is a function of $\Ei{k}$ and $\Mi{<k}$. 
		\item The goal of the players is to output a valid coloring of the input graph $G$ by $\PPi{k}$ writing it last on the shared blackboard as the message $\Mi{k}$. 
	\end{enumerate}  
	The {\textbf{communication cost}} of a protocol used by the players is defined as the worst-case number of bits written by any one player on the blackboard on any input. 
\end{Definition}

The following proposition is standard. 

\begin{proposition}\label{prop:stream-cc}
	Suppose there is a function $f: \IN^+ \rightarrow \IN^+$ and a deterministic streaming algorithm that on any $n$-vertex graph $G$ with known maximum degree $\Delta$, 
	outputs an $f(\Delta)$-coloring of $G$ using $s=s(n,\Delta)$ bits of space. Then, there also exists a deterministic protocol for $\coloring(n,\Delta,k)$ for any $k > 1$ with communication cost $O(s)$ bits 
	that outputs an $f(\Delta)$-coloring of any input graph. 
\end{proposition}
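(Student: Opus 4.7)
The plan is a direct black-box simulation of the streaming algorithm $\mathcal{A}$, using the shared blackboard to pass along its $s$-bit memory state between consecutive players. The key observation is that since $E = \Ei{1} \sqcup \cdots \sqcup \Ei{k}$, concatenating any local orderings of $\Ei{1}, \ldots, \Ei{k}$ yields a valid edge stream for $G$, so the streaming execution can be partitioned across the players in their speaking order.

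Concretely, I would fix an arbitrary ordering of the edges inside each $\Ei{i}$ (known locally to $\PPi{i}$). Player $\PPi{1}$ initializes $\mathcal{A}$, feeds it the edges of $\Ei{1}$ in this order, and writes the resulting memory state (at most $s$ bits) as the message $\Mi{1}$. For $i = 2, \ldots, k-1$, player $\PPi{i}$ reads $\Mi{i-1}$ from the blackboard, interprets it as $\mathcal{A}$'s state after processing $\Ei{1} \cup \cdots \cup \Ei{i-1}$, resumes the simulation on $\Ei{i}$, and writes the new state as $\Mi{i}$. Finally, $\PPi{k}$ continues the simulation on $\Ei{k}$ and then runs $\mathcal{A}$'s output step to obtain a proper $f(\Delta)$-coloring of $G$, which it writes as $\Mi{k}$. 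Correctness is immediate: what has been simulated is exactly the execution of $\mathcal{A}$ on a legal edge stream for $G$, so $\Mi{k}$ is a valid coloring by the correctness of $\mathcal{A}$.

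For the cost bound, each $\Mi{i}$ with $i < k$ is a snapshot of $\mathcal{A}$'s memory and therefore has length at most $s$. The final message $\Mi{k}$ is a coloring, which can be written as an $n$-tuple of color indices in $n \cdot \lceil \log_2 f(\Delta) \rceil$ bits; since $\mathcal{A}$ itself must produce this coloring from its $s$-bit working memory, we have $s \geq n \log f(\Delta)$ in the regime of interest (in particular, the semi-streaming regime where $f(\Delta) \leq n$), so $|\Mi{k}| = O(s)$ as well, and the communication cost is $O(s)$. There is essentially no technical obstacle here — this is the textbook streaming-to-one-way-communication simulation, adapted to the sequential shared-blackboard model; the only minor point to check is the convention that $\PPi{k}$ writes the explicit coloring rather than just $\mathcal{A}$'s final state, but this is absorbed into the $O(s)$ bound under the standard semi-streaming space budget.
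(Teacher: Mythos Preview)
Your proposal is correct and follows essentially the same approach as the paper: simulate the streaming algorithm by having each player write the $s$-bit memory state to the blackboard and having the last player output the coloring. You spell out a few details (edge orderings, the output-size bound for $\Mi{k}$) that the paper leaves implicit, but there is no substantive difference.
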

\begin{proof}
	The players simply run the streaming algorithm on their input by writing the content of the memory of the algorithm from one player to the next on the blackboard, so that the next player can continue running the 
	algorithm on their input. At the end, the last player computes the output of the streaming algorithm and writes it on the blackboard. 
	
	The maximum message size written on the blackboard is proportional to the size of memory of the streaming algorithm and is thus $O(s)$ as desired. \Qed{prop:stream-cc}
	
\end{proof}

A careful reader may have noticed from~\Cref{prop:stream-cc} that in~\Cref{def:cc-coloring}, we do not even need the ability of the protocol to read the messages of \emph{all} prior players (via the blackboard), and the message of one player
to the next suffice. We allow for this extra power for technical reasons as it simplifies the proof of our lower bound (this is a typical approach in streaming lower bounds). 

The following is the main technical result of our paper. 

\begin{theorem}\label{thm:cc-lower}
	There are absolute constants $n_0,\eta_0 > 0$ such that the following is true. Consider any choice of the following parameters 
	\[
		n \geq n_0,\qquad  \Delta\geq 64\ln^2{(2n)}, \qquad  1 \leq k \leq \log_{\Delta}\!{(n)}, \qquad s \geq n\log{\Delta}.
	\]
	Then no deterministic communication protocol for $\coloring(n,\Delta,k)$ with communication cost $s$ can color every valid input graph with fewer than 
	\[
		\paren{\frac{1}{\eta_0 \cdot k}}^{2k} \cdot \paren{\frac{n \cdot \Delta}{s}}^{k} \text{colors}.
	\]
\end{theorem}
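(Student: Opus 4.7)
The plan is to construct an adaptive adversarial distribution over the inputs $(E_1,\ldots,E_k)$ one player at a time, so that the Compression Lemma (\Cref{lem:rg}) can be applied iteratively to pin down a hard transcript $M_1=\phi_1^*,\ldots,M_{k-1}=\phi_{k-1}^*$ and then once more to pin down a hard output coloring $C^*$ whose set of legally-monochromatic pairs is too small to be consistent with \Cref{prop:C-coloring}. Throughout the construction, I maintain a base graph $\Gbase^{(i)}$ on a vertex set $V_i\subseteq V$ that encodes the edges still ``available'' to players $i,\ldots,k$, sample $E_i$ from $\GR(\Gbase^{(i)},p_i,d_i)$ conditioned on $M_i=\phi_i^*$, and set $\Gbase^{(i+1)}:=\Gmiss(\phi_i^*)|_{V_{i+1}}$, where $V_{i+1}$ is obtained from $V_i$ by discarding the $|V_i|/k$ vertices of largest degree in $\Gmiss(\phi_i^*)$.

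Concretely, the parameters will be $p_1=\Delta/(2kn)$, $d_1=n$, and inductively $d_{i+1}$ equal to the max-degree of $\Gmiss(\phi_i^*)|_{V_{i+1}}$ (which is at most $2k|\Gmiss(\phi_i^*)|/|V_i|$ after the Markov-style vertex removal), together with $p_{i+1}=\Delta/(2kd_{i+1})$. This enforces $p_id_i=\Delta/(2k)$ at every stage, so each $E_i$ deterministically has max-degree $<2p_id_i=\Delta/k$ by \Cref{def:rg} and their union has max-degree at most $\Delta$ as required. Combined with the Compression Lemma bound $|\Gmiss(\phi_i^*)|\leq O(s/p_i)$, this yields the multiplicative recurrence $p_{i+1}=\Theta(p_i\cdot\Delta|V_i|/(k^2s))$, which unfolds into $p_k=\Theta((\Delta/(kn))\cdot(\Delta n/(k^2s))^{k-1})$, while the vertex losses compound to $|V_i|\geq n(1-1/k)^{i-1}\geq n/e$. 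The two preconditions of \Cref{lem:rg} must also be checked inductively: $d_i\geq\Delta(\Gbase^{(i)})$ holds by construction, and $d_i\geq 4\ln(2n)/p_i$ reduces to $\Delta\geq 8k\ln(2n)$, which is implied by $\Delta\geq 64\ln^2(2n)$ and $k\leq\log_\Delta n$.

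To close the argument, I treat $P_k$'s output coloring itself as a compression of $E_k$ into at most $s$ bits and apply \Cref{lem:rg} one last time to obtain a coloring $C^*$ with $|\Gmiss(C^*)|\leq O(s/p_k)$. The forcing observation is that on $V_k$, the only pairs $(u,v)$ that $C^*$ may legally color the same are those lying in $\Gmiss(C^*)$: any $(u,v)\notin\Gbase^{(k)}$ belongs to $\Gbase^{(j)}\setminus\Gbase^{(j+1)}$ for some $j<k$ and is therefore an edge in \emph{some} consistent $E_j$, while any $(u,v)\in\Gbase^{(k)}\setminus\Gmiss(C^*)$ is an edge in \emph{some} $E_k$ mapping to $C^*$---both must receive different colors for $C^*$ to be valid on every consistent input. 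Combining with \Cref{prop:C-coloring}, which guarantees at least $|V_k|^2/(4c)$ monochromatic pairs for any $c$-coloring with $c\leq|V_k|/2$, gives $c\gtrsim|V_k|^2p_k/s$; substituting the expressions for $p_k$ and $|V_k|$ collapses the right-hand side to $(n\Delta/s)^k/(\eta_0k)^{2k}$ for a suitable absolute constant $\eta_0$. The main obstacle I foresee is making the adaptive conditioning go through cleanly across all $k$ stages---after conditioning on $M_i=\phi_i^*$, the residual distribution of $(E_{i+1},\ldots,E_k)$ must still admit a clean $\GR$-type interpretation for a fresh invocation of \Cref{lem:rg}---and carefully tracking the constants of the recurrence so that they collapse into exactly the $(1/(\eta_0 k))^{2k}$ prefactor of the theorem.
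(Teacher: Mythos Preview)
Your proposal is correct and follows essentially the same argument as the paper: an adaptive adversary that iteratively applies the Compression Lemma to each player's message, passes $\Gmiss(\phi_i^*)$ (restricted to low-degree vertices) as the next base graph, maintains $p_id_i=\Delta/(2k)$, and finishes by comparing \Cref{prop:C-coloring} against the bound on $|\Gmiss(\phi_k^*)|$; the only cosmetic differences are that the paper removes vertices above a fixed degree threshold $d_{i+1}$ (yielding $|V_i|\ge n/2$) rather than the top $|V_i|/k$ by degree, and works on $V_{k+1}$ rather than $V_k$ in the final step. Your worry about the adaptive conditioning is a non-issue---because the adversary \emph{chooses} the distribution $\GR(\Gbase^{(i+1)},p_{i+1},d_{i+1})$ for $E_{i+1}$ only \emph{after} fixing $\phi_i^*$, each invocation of \Cref{lem:rg} is on a fresh, unconditioned $\GR$ distribution, exactly as in the paper.
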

\noindent
As a corollary of this and~\Cref{prop:stream-cc}, we can formalize~\Cref{res:lower} as follows. (other settings of parameters in~\Cref{thm:cc-lower} imply various other
lower bounds for streaming algorithms also.) 

\begin{corollary}\label{cor:main-lower}
	For any  $q \geq 1$, $\alpha \in (0,1)$, and sufficiently large $n > 1$, 
	no deterministic single-pass streaming algorithm can obtain a proper coloring of every  graph with maximum degree at most $\Delta$ for the following parameters: 
	\begin{itemize}[leftmargin=20pt]
		\item[$i).$] $O(n\cdot \log^{q}{n})$ space and fewer than $\exp\!\paren{\Delta^{1/4q}}$ colors for $\Delta = 200\log^{q+1}\!{(n)}$;
		\item[$ii).$] $O(n^{1+\alpha})$ space and fewer than $\Delta^{{1}/{3\alpha}}$ colors for $\Delta = n^{2\alpha}$. 
	\end{itemize}
\end{corollary}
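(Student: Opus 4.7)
My plan is to derive Corollary~\ref{cor:main-lower} directly by combining Proposition~\ref{prop:stream-cc} with Theorem~\ref{thm:cc-lower}. The proposition turns any hypothetical deterministic single-pass streaming algorithm with space $s$ into a deterministic protocol for $\coloring(n,\Delta,k)$, for every number of players $k \ge 1$, at communication cost $O(s)$. So if the algorithm always colors with at most $f(\Delta)$ colors, the theorem forces
$$
f(\Delta) \;\ge\; L_k \;:=\; \Bigl(\tfrac{1}{\eta_0 k}\Bigr)^{\!2k}\Bigl(\tfrac{n\Delta}{c\cdot s}\Bigr)^{\!k}
$$
for every integer $k \le \log_\Delta n$ satisfying the numerical hypotheses $\Delta \ge 64\ln^2(2n)$ and $s \ge n\log\Delta$. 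The whole task thus reduces, in each case, to picking a $k$ that meets the hypotheses and makes $L_k$ exceed the claimed color threshold.

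For part $(i)$, substituting $s = O(n\log^q n)$ and $\Delta = 200\log^{q+1}n$ collapses $n\Delta/s$ to a constant multiple of $\log n$, so up to an absolute factor $L_k = (c'\log n/k^2)^k$. Treating the logarithm as a function of a real variable, $\log L_k = k(\log(c'\log n) - 2\log(\eta_0 k))$ is maximized at $k^\star = \Theta(\sqrt{\log n})$, where $\log L_{k^\star} = \Theta(\sqrt{\log n})$; rounding $k^\star$ down to an integer costs only a constant factor. The hypotheses are straightforward: $\Delta \ge 64\ln^2(2n)$ is immediate, $s \ge n\log\Delta$ follows from $\log\Delta = O(\log\log n)$, and $k^\star \le \log_\Delta n = \Theta(\log n/\log\log n)$ for large $n$. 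Since $\Delta^{1/(4q)} = O((\log n)^{(q+1)/(4q)})$ and the exponent $(q+1)/(4q)$ is at most $1/2$ for $q \ge 1$, the target $\exp(\Delta^{1/(4q)})$ is at most $\exp(O(\sqrt{\log n}))$ and is therefore strictly below $\exp(\log L_{k^\star})$ for all sufficiently large~$n$.

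For part $(ii)$, substituting $s = O(n^{1+\alpha})$ and $\Delta = n^{2\alpha}$ gives $n\Delta/s = \Theta(n^\alpha)$ and $\log_\Delta n = 1/(2\alpha)$. The natural choice is to take $k$ essentially equal to $\log_\Delta n$, i.e.\ $k := \lfloor 1/(2\alpha)\rfloor$; this is a constant depending only on $\alpha$, so the penalty $(\eta_0 k)^{2k}$ is a multiplicative $O_\alpha(1)$ factor and the remaining part of $L_k$ is $\Theta_\alpha(n^{\alpha k})$. A direct comparison of exponents of $n$ against the claimed threshold $\Delta^{1/(3\alpha)}$ then gives the desired strict inequality for all sufficiently large $n$.

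The main obstacle is purely optimization-and-bookkeeping. In part $(i)$ one must balance the gain $(n\Delta/s)^k$ against the penalty $(\eta_0 k)^{-2k}$ to land on $k^\star \sim \sqrt{\log n}$, and simultaneously ensure $k^\star \le \log_\Delta n$; the inequality $\Delta^{1/(4q)} \le O(\sqrt{\log n})$ is comfortable for $q \ge 2$ and becomes tight at $q = 1$, so some care with the absolute constant $\eta_0$ is needed there. Part $(ii)$ is arithmetic once $k$ is fixed, since $\alpha$ is treated as a constant and $n$ is allowed to be sufficiently large. No conceptual difficulty arises beyond invoking Theorem~\ref{thm:cc-lower} with the right parameters.
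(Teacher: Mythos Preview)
Your proposal is correct and follows the paper's approach: combine Proposition~\ref{prop:stream-cc} with Theorem~\ref{thm:cc-lower} and then choose $k$ appropriately for each part. The only minor difference is that for part~(i) the paper sets $k=\sqrt{\log_\Delta n}=\Theta\bigl(\sqrt{\log n/\log\log n}\,\bigr)$ rather than your continuous optimizer $k^\star=\Theta(\sqrt{\log n})$, and for part~(ii) it simply writes $k=1/(2\alpha)$ without taking the floor; otherwise the arguments coincide.
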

\begin{proof}
	We prove both parts by~\Cref{prop:stream-cc} and using different parameters in~\Cref{thm:cc-lower}. 
	\begin{itemize}[leftmargin=20pt]
		\item[$i).$] Set $k= \sqrt{\log_{\Delta}{n}} = \Theta(\sqrt{\frac{\log{n}}{\log\log{n}}})$ and $s=O(n\cdot \log^{q}{n})$. These parameters, plus $n$ and $\Delta$, satisfy the hypotheses of~\Cref{thm:cc-lower}. 
		As such, we get that the minimum number of colors needed to color the input graph in this case is at least 
		\begin{align*}
			\paren{\frac{1}{\eta_0 \cdot k}}^{2k} \hspace{-10pt}\cdot \paren{\frac{n \cdot \Delta}{s}}^{k} = \paren{\frac{\log\log{n} \cdot \log{n}}{\Theta(1) \cdot \log{n}}}^{\Theta(\sqrt{\frac{\log{n}}{\log\log{n}}})} > 
			\exp\paren{200 \cdot \sqrt{\frac{\log{n}}{\log\log{n}}}} \gg \exp\!\paren{\Delta^{1/4q}}, 
		\end{align*}
		by a simple calculation of the parameters in these bounds. 
		\item[$ii).$] Set $k=\log_\Delta{n} = 1/2\alpha$ and $s=O(n^{1+\alpha})$. These parameters, plus $n$ and $\Delta$, satisfy the hypotheses of~\Cref{thm:cc-lower}. 
		As such, we get that the minimum number of colors needed to color the input graph in this case is at least 
		\begin{align*}
			\paren{\frac{1}{\eta_0 \cdot k}}^{2k} \hspace{-10pt}\cdot \paren{\frac{n \cdot \Delta}{s}}^{k} = \paren{\frac{n^{\alpha}}{\Theta(1)}}^{1/2\alpha} = \Theta(\sqrt{n}) \gg \Delta^{1/3\alpha},
		\end{align*}
		again by a simple calculation. This concludes the proof. \Qed{cor:main-lower}
	\end{itemize}

\end{proof}

\subsection{A Communication Lower Bound for Coloring}\label{sec:lower}

Before getting to the lower bound construction, we specify a recursive set of parameters. 
\paragraph{Parameters.} Our construction is governed by the following two parameters:
\begin{itemize}
	\item $p_i$: the probability parameter used in defining the graph of each player $\PPi{i}$ from the random graph distribution $\GR$ for base graphs chosen by the adversary; 
	\item $d_i$: a threshold on maximum degree of base graph (used in $\GR$) chosen by the adversary for each player $\PPi{i}$. 
\end{itemize} 
These parameters are defined recursively as follows (these expression would become clear shortly from the description and analysis of the lower bound): 
\begin{align}
& d_1 = n, \quad p_1 := \frac{\Delta}{2k \cdot n}, \qquad \text{and for  $i > 1$:} \quad d_{i} = \frac{2 \, \ln{2} \cdot (s+1) \cdot 2k}{p_{i-1} \cdot n}, \quad p_i = \frac{\Delta}{2k \cdot d_i}\label{eq:parameters}.
\end{align}
It is easier for us to work with the recursive definitions of these parameters in most of the analysis (as their closed form is tedious to work with). But, we  also compute them explicitly as follows. 

\begin{claim}\label{clm:parameters}
	For any $i > 1$, we have, 
	\begin{align*}
		d_i &= n \cdot \paren{\frac{2\ln{2} \cdot (s+1) \cdot (2k)^2}{n \cdot \Delta}}^{i-1} \qquad p_i = \frac{\Delta}{2k \cdot n} \cdot \paren{\frac{n \cdot \Delta}{2\ln{2} \cdot (s+1) \cdot (2k)^2}}^{i-1}.
	\end{align*}
\end{claim}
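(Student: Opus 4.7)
The plan is a straightforward induction on $i$, using the recursive definitions in~\Cref{eq:parameters}. I would set up a joint induction hypothesis claiming both the closed-form for $d_i$ and for $p_i$ simultaneously (they are entangled through the recursion $d_i \mapsto p_i \mapsto d_{i+1}$), and verify the base case at $i=1$ by direct substitution, which matches since the $(i-1)$-th power is empty.

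For the inductive step, assume the formulas hold at index $i-1$. To compute $d_i$, I would plug the inductive expression for $p_{i-1}$ into the recursion $d_i = \frac{2\ln 2 \cdot (s+1) \cdot 2k}{p_{i-1} \cdot n}$. The reciprocal of $p_{i-1}$ produces a factor of $\frac{2k n}{\Delta}$ together with the inverse of the $(i-2)$-th power, which flips to $\bigl(\frac{2\ln 2 \cdot (s+1) \cdot (2k)^2}{n \Delta}\bigr)^{i-2}$. Multiplying by the leading constant $\frac{2\ln 2 \cdot (s+1) \cdot 2k}{n}$ bundles another copy of the base into this product, yielding exactly $n \cdot \bigl(\frac{2\ln 2 \cdot (s+1) \cdot (2k)^2}{n \Delta}\bigr)^{i-1}$, as claimed.

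Then $p_i$ follows immediately from $p_i = \frac{\Delta}{2k \cdot d_i}$: substituting the just-verified formula for $d_i$ and inverting the base of the geometric factor (since $d_i$ has it raised to a positive power but $p_i$ inherits it as a reciprocal) produces $\frac{\Delta}{2k n} \cdot \bigl(\frac{n\Delta}{2\ln 2 \cdot (s+1) \cdot (2k)^2}\bigr)^{i-1}$. This completes the induction.

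Honestly, there is no conceptual obstacle here — the statement is a purely mechanical unrolling of the two-term recursion, and the only thing to be careful about is keeping the exponent bookkeeping straight (the $(i-2)$-th power coming from the inductive hypothesis for $p_{i-1}$ must combine cleanly with one extra base factor to give the $(i-1)$-th power required for $d_i$). I would present it as a short induction and then note that the formula for $p_i$ drops out with one further line.
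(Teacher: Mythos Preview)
Your proposal is correct and is exactly the approach the paper takes: the paper's own proof is a single sentence stating that the equations can be verified by induction on $i$ using~\Cref{eq:parameters}, and you have simply written out those routine details.
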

\begin{proof}
	These equations can be verified by induction on $i \geq 1$ in~\Cref{eq:parameters}. \Qed{clm:parameters}
	
\end{proof}

The lower bound construction is as follows (see~\Cref{fig:contain} for an illustration). 

\begin{tbox}
	 An \textbf{adversary} that generates the ``hard'' input of players (using parameters in~\Cref{eq:parameters}).

	\begin{enumerate}[label=$(\roman*)$]
		\item Let $\Gbase(1)$ be a clique on $n$ vertices $V_1 = V$. 
		\item For $i=1$ to $k$:
		\begin{enumerate}[leftmargin=15pt]
			\item Let $\GR_i:= \GR(\Gbase(i),p_i,d_i)$ and let 
			\[
			\Phi_i = \Phi_i(\GR_i,\Mstari{<i}): \supp{\GR_i} \rightarrow \set{0,1}^s
			\]
			be the function generating the message of player $\PPi{i}$ after seeing messages $\Mstari{<i}$ of the first $i-1$ players; we ensure that the input graph of 
			player $\PPi{i}$ given previous messages $\Mstari{<i}$ is chosen from $\GR_i$ and thus this is well defined. 
			\item Notice that $\Phi_i$ is a compression algorithm. Apply \Cref{lem:rg} and let $\Mstari{i}$ be the special summary of this compression algorithm, i.e., message for $\PPi{i}$. We shall verify the hypotheses of the lemma in \Cref{lem:sat-hyp}.
			\item\label{line:induced} Let $V_{i+1}$ be the set of vertices in $\Gmiss(\Mstari{i})$ with degree at most $d_{i+1}$ and $\Gbase(i+1)$ be the subgraph 
			of $\Gmiss(\Mstari{i})$ induced on $V_{i+1}$. 
		\end{enumerate}
		\item Let $\Gi{i} = (V_{i}, \Ei{i})\in \supp{\GR_i}$ be such that $\Phi_i(\Gi{i}) = \Mstari{i}$ for all $i$. Give player $\PPi{i}$ the edge set $\Ei{i}$ as the adversarial input. We shall verify that this is a valid input in \Cref{lem:correct}. 
	\end{enumerate}
\end{tbox}

Notice that in this construction, we allow the players to know that their inputs come from a smaller distribution $\GR_i$, not the entire space of edges. This is convenient for our analysis, and since this only makes the players' jobs easier (as they
can simply ignore this information), this can only strengthen our lower bound.

It is  useful to note the containment relationships between various edge sets in the lower bound construction. For all $i \in [k]$, the edge sets $\Ei{i}$ and $\Emiss(\Mstari{i})$ are \emph{disjoint} because $\Gi{i}$ was mapped to $\Mstari{i}$, and both are subsets of $\Ebase(i)$ by definition. Also, $\Ebase(i)$ itself is a subset of $\Emiss(\Mstari{i-1})$ by Line~\ref{line:induced} of the construction, obtained by removing ``high degree'' vertices in $\Emiss(\Mstari{i-1})$. A visual is provided in~\Cref{fig:contain} for reference.

\begin{figure}
	\begin{center}
		\begin{tikzpicture}
		\draw[fill=lightgray] (0, 0) -- (10, 0) -- (10, 4) -- (0, 4) -- (0, 0) node[anchor=south west] {$\Ebase(1) = \binom{V}{2}$};
		\draw[fill=lightblue] (0.25, 0.75) -- (1.75, 0.75)  -- (1.75, 3.75) -- (0.25, 3.75) -- (0.25, 0.75) node[anchor=south west] {$\Ei{1}$};
		\draw[fill=lightred] (2, 0.75) -- (9.75, 0.75)  -- (9.75, 3.75) -- (2, 3.75) -- (2, 0.75) node[anchor=south west] {$\Emiss(\Mstari{1})$};
		\draw[fill=lightgray] (2.25, 1.5) -- (9.5, 1.5) -- (9.5, 3.5) -- (2.25, 3.5) -- (2.25, 1.5) node[anchor=south west] {$\Ebase(2)$};
		\draw[fill=lightblue] (2.5, 2.25) -- (4, 2.25) -- (4, 3.25) -- (2.5, 3.25) -- (2.5, 2.25) node[anchor=south west] {$\Ei{2}$};
		\draw[fill=lightred] (4.25, 2.25) -- (9.25, 2.25) -- (9.25, 3.25) -- (4.25, 3.25) -- (4.25, 2.25) node[anchor=south west] {$\Emiss(\Mstari{2})$};
		\node at (7.75, 2.75) {$\cdots$};
		\end{tikzpicture}
	\end{center}
	\caption{An illustration of edge set containments in the adversary construction.}
	\label{fig:contain}
\end{figure}
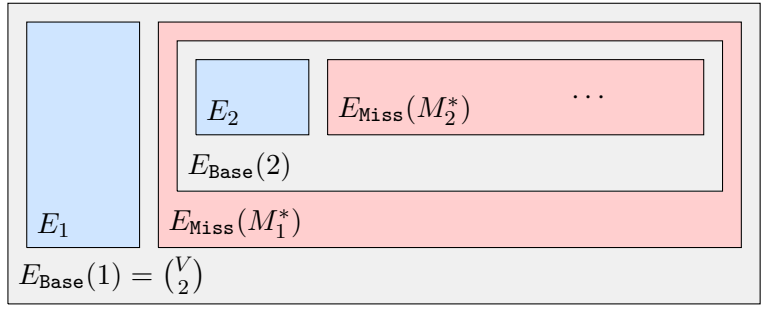

We start with two lemmas verifying that the above construction produces a valid input and satisfies the hypotheses of~\Cref{lem:rg} it invokes. 

\begin{lemma}\label{lem:sat-hyp}
	For all $i \in [k]$, the parameters $p_i$ and $d_i$ in the distribution $\GR_i = \GR(\Gbase(i), p_i, d_i)$ satisfy the hypotheses of \Cref{lem:rg}. That is, $d_i \ge \max\{\Delta(\Gbase(i)), 4\ln(2n)/p_i\}$ and $p_i \in (0,1)$.
\end{lemma}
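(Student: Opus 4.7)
The plan is to verify the three required inequalities $d_i \ge \Delta(\Gbase(i))$, $d_i \ge 4\ln(2n)/p_i$, and $p_i \in (0,1)$ separately for every $i \in [k]$. Only the first uses the adversary's construction; the other two reduce to arithmetic involving the hypotheses of~\Cref{thm:cc-lower} on $n$, $\Delta$, $k$, and $s$.

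The max-degree condition is essentially definitional. For $i = 1$, $\Gbase(1)$ is the clique on $n$ vertices, so $\Delta(\Gbase(1)) = n-1 \le n = d_1$. For $i > 1$, Line~\ref{line:induced} of the adversary construction explicitly defines $\Gbase(i)$ as the induced subgraph of $\Gmiss(\Mstari{i-1})$ on vertices of degree at most $d_i$, giving $\Delta(\Gbase(i)) \le d_i$ for free.

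The inequality $d_i \ge 4\ln(2n)/p_i$ is even cleaner. Substituting $p_i = \Delta/(2k d_i)$ cancels $d_i$ from both sides and collapses the condition, uniformly in $i$, to the single inequality $\Delta \ge 8k\ln(2n)$. Using $k \le \log_\Delta n = \ln n/\ln\Delta$, this is equivalent to $\Delta \ln\Delta \ge 8 \ln n \cdot \ln(2n)$. The hypothesis $\Delta \ge 64\ln^2(2n)$ then finishes the job with ample slack, since it also implies $\ln\Delta \ge \ln 64 > 4$, so $\Delta\ln\Delta \ge 256\ln^2(2n) \ge 8\ln n \cdot \ln(2n)$.

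Finally, for $p_i \in (0,1)$: positivity is immediate from~\Cref{eq:parameters}, and $p_i < 1$ is, via $p_i = \Delta/(2k d_i)$, equivalent to $d_i > \Delta/(2k)$. The case $i = 1$ is trivial since $d_1 = n > \Delta$. For $i > 1$, I would pass to the closed form from~\Cref{clm:parameters} and write $p_i = p_1 \cdot \rho^{\,i-1}$ where $\rho = n\Delta/(8\ln 2 \cdot (s+1) k^2)$. If $\rho \le 1$, then $p_i \le p_1 < 1$ immediately. If $\rho > 1$ then $p_i$ is increasing in $i$, so it suffices to check $p_k < 1$, which rearranges to $\rho^{k-1} < 2kn/\Delta$; using $s \ge n\log\Delta$ to bound $\rho \lesssim \Delta/k^2$ and then $\Delta^k \le n$ (the hypothesis $k \le \log_\Delta n$) gives this by direct computation. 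This final subcase is the only step requiring any care and the only place both of the hypotheses $s \ge n\log\Delta$ and $k \le \log_\Delta n$ are used in tandem, but even there the verification is routine and no genuine obstacle arises.
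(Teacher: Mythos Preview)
Your proof is correct and follows the same three-part decomposition as the paper. The only difference worth noting is in the argument for $p_i<1$: your case split on whether $\rho\le 1$ is unnecessary, as the paper simply bounds $\rho<\Delta$ (using merely $s\ge n$, not the full hypothesis $s\ge n\log\Delta$) to obtain $p_i=p_1\rho^{\,i-1}<\Delta^i/n\le\Delta^k/n\le 1$ uniformly in $i$, without any case analysis.
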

\begin{proof}
	The fact that $d_i \ge \Delta(\Gbase(i))$ follows from $d_1 = n$ in the case $i = 1$, and directly from the construction of $\Gbase(i)$ in Line \ref{line:induced} for all other $i$. 
	
	To show that $d_i \ge 4\ln(2n)/p_i$,  we first note that $p_i \cdot d_i = \frac{\Delta}{2k}$ by definition of $p_i$ in~\Cref{eq:parameters}. Hence it suffices to show that $\Delta \ge 8k\ln(2n)$. Referencing the constraints in the statement of \Cref{thm:cc-lower}, we have  $\Delta \ge 64\ln^2(2n)$ which implies $\sqrt{\Delta} \ge 8\ln(2n)$, and we have $\sqrt{\Delta} \ge k$, which combined
	with the latter inequality, implies $\Delta \ge 8k\ln(2n)$. This proves the bound for $d_i$. 
	
	We now prove the bound for $p_i$. For this, it is easier to work with the closed-form of $p_i$ in~\Cref{clm:parameters}. We have, 
	\begin{align*}
		p_i &= \frac{\Delta}{2k \cdot n} \cdot \paren{\frac{n \cdot \Delta}{2\ln{2} \cdot (s+1) \cdot (2k)^2}}^{i-1} < \frac{\Delta^{i}}{n} \leq \frac{\Delta^k}{n} \leq 1, 
	\end{align*}
	as $s \geq n$ and $k > 1$ for the first inequality and by the upper bound of $k \leq \log_{\Delta}\!{(n)}$ for the last one. It is also clear that $p_i > 0$, 
	thus concluding the proof. \Qed{lem:sat-hyp}
	
\end{proof}

\begin{lemma}\label{lem:correct}
	Any graph $G$ constructed by the adversary has $\Delta(G) \leq \Delta$ and  no parallel edges. 
\end{lemma}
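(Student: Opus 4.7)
The plan is to prove both parts by exploiting the nested containment structure of the edge sets depicted in~\Cref{fig:contain}, together with the deterministic degree cap enforced in the resampling step of~\Cref{def:rg}.

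For the no-parallel-edges claim, I will show that the sets $\Ei{1}, \ldots, \Ei{k}$ are pairwise disjoint. The key observation is that, for each $i$, $\Emiss(\Mstari{i})$ is by definition the set of edges appearing in \emph{no} graph in $\supp{\GR_{\Mstari{i}}}$; since $\Gi{i}$ itself lies in this support, $\Ei{i}$ and $\Emiss(\Mstari{i})$ are disjoint. Combined with $\Ebase(i+1) \subseteq \Emiss(\Mstari{i})$ from Line~\ref{line:induced} of the construction, iterating gives $\Ebase(i) \subseteq \Emiss(\Mstari{i-1}) \subseteq \Ebase(i-1) \subseteq \Emiss(\Mstari{i-2}) \subseteq \cdots \subseteq \Ebase(1)$, with each step removing the previously used $\Ei{j}$. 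Hence $\Ei{i} \subseteq \Ebase(i)$ is disjoint from every earlier $\Ei{j}$, and no edge appears more than once in $G$.

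For the maximum degree bound, the definition of $\GR(\Gbase(i),p_i,d_i)$ guarantees deterministically (via the resampling condition in~\Cref{def:rg}) that every sampled graph $\Gi{i}$ satisfies $\Delta(\Gi{i}) < 2 p_i \cdot d_i$. The recursive definition in~\Cref{eq:parameters} gives $p_i \cdot d_i = \Delta/(2k)$ for every $i$, so each player's input graph has maximum degree strictly less than $\Delta/k$. Since the $\Ei{i}$ are pairwise disjoint by the previous paragraph, the degree of any vertex $v$ in $G$ equals the sum of its degrees across $\Gi{1}, \ldots, \Gi{k}$, which is strictly less than $k \cdot \Delta/k = \Delta$.

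I do not anticipate a serious obstacle: the argument is essentially a bookkeeping exercise, with the telescoping disjointness being the only point requiring some care. All the technical work has already been done in setting up~\Cref{def:rg,def:ca} and in the recursive choice of parameters in~\Cref{eq:parameters}, so this lemma is really a sanity check that the adversary's randomized construction always falls within the promised input class.
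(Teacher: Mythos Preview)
Your proposal is correct and follows essentially the same approach as the paper: the degree bound comes from the resampling cap $\Delta(\Gi{i}) < 2p_i d_i = \Delta/k$ summed over $k$ players, and the absence of parallel edges comes from the pairwise disjointness of the $\Ei{i}$ via the nested containments $\Ei{j} \subseteq \Ebase(j)$ disjoint from $\Emiss(\Mstari{j}) \supseteq \Ebase(j+1) \supseteq \Ei{i}$ for $i > j$. If anything, you spell out the disjointness chain more explicitly than the paper, which simply defers to the containment discussion accompanying~\Cref{fig:contain}.
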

\begin{proof}
	Consider each graph $\Gi{i}$ as input to player $\PPi{i}$. We have, 
	\[
		\Delta(\Gi{i}) < 2p_i d_i = \frac{\Delta}{k}, 
	\]
	where the first inequality is by~\Cref{def:rg} for $\GR(\Gbase(i),p_i,d_i)$ and the second equality is by the definition of $p_i$ in~\Cref{eq:parameters}. This implies that the graph $\Gi{i}$ presented to each player has maximum degree at most $\Delta/k$. Given that there are $k$ players in the game, 
	this means the final graph has maximum degree at most $\Delta$. 
	
	To show that there are no parallel edges, simply note that $E_1, \dots, E_k$ are pairwise disjoint by the edge set containments noted above. \Qed{lem:correct}
	
\end{proof}

We  start proving the communication lower bound. First, we show that the set $V_{k+1}$ obtained at the end, i.e., after presenting last player's input, still is ``quite large''. 

\begin{lemma}\label{lem:set-size}
	For any $i \in [k+1]$, we have 
	$
		 \card{V_i} \geq n - (i-1) \cdot \dfrac{n}{2k}. 
	$
\end{lemma}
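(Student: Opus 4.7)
\textbf{Proof plan for \Cref{lem:set-size}.} The plan is to proceed by induction on $i \in [k+1]$, showing that at each step of the adversary construction only a small fraction of vertices (at most $n/(2k)$) get removed. The base case $i = 1$ is immediate since $V_1 = V$ has exactly $n$ vertices. For the inductive step, assuming $|V_i| \geq n - (i-1) \cdot n/(2k)$, I want to argue $|V_{i+1}| \geq |V_i| - n/(2k)$, which then gives the desired bound.

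The key inputs are \Cref{lem:rg} (applied to the compression algorithm $\Phi_i$ on $\GR_i$) and the definition of $d_{i+1}$ from \Cref{eq:parameters}. By \Cref{lem:sat-hyp}, the hypotheses of \Cref{lem:rg} are satisfied for each $i$, so there is a summary $\Mstari{i}$ (produced by the adversary) with
\[
\card{\Emiss(\Mstari{i})} \le \frac{\ln 2 \cdot (s+1)}{p_i}.
\]
The sum of vertex degrees in $\Gmiss(\Mstari{i})$ is then at most $2 \ln 2 \cdot (s+1)/p_i$. By a simple counting (Markov-style) argument, the number of vertices of degree strictly greater than $d_{i+1}$ in $\Gmiss(\Mstari{i})$ is at most
\[
\frac{2 \ln 2 \cdot (s+1)}{p_i \cdot d_{i+1}} \;=\; \frac{n}{2k},
\]
where the equality uses the recursive definition $d_{i+1} = \frac{2 \ln 2 \cdot (s+1) \cdot 2k}{p_i \cdot n}$ from \Cref{eq:parameters}. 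Since $V_{i+1}$ is exactly the complement (within the relevant vertex set) of these ``high degree'' vertices, we lose at most $n/(2k)$ vertices going from $V_i$ to $V_{i+1}$, completing the inductive step.

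I do not expect any real obstacle here; the lemma is essentially a consequence of plugging the bound from \Cref{lem:rg} into the carefully tuned definition of $d_{i+1}$. The only subtlety to be careful about is that the edges of $\Gmiss(\Mstari{i})$ all lie inside $V_i$ (since $\Emiss(\Mstari{i}) \subseteq \Ebase(i)$ by construction, and $\Ebase(i)$ is supported on $V_i$), so the Markov bound is really counting vertex losses from $V_i$, not spurious losses from outside. Chaining $i$ inductive steps then yields the claimed bound $\card{V_i} \ge n - (i-1) \cdot n/(2k)$.
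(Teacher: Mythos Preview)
Your proposal is correct and follows essentially the same approach as the paper: induction on $i$, with the inductive step bounding $|V_i \setminus V_{i+1}|$ by combining the edge bound from \Cref{lem:rg} with a degree-counting (Markov) argument and the recursive definition of $d_{i+1}$ from \Cref{eq:parameters}. The paper phrases the counting as a lower bound on $|\Emiss(\Mstari{i})|$ from the removed vertices, while you phrase it via the sum of degrees, but these are the same computation.
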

\begin{proof}
	The proof is by induction on $i$.  For $i=1$, we simply have $V_1 = V$ and thus $\card{V_1} = n$; hence, the base case holds. For the inductive step, it suffices to show that at most $\frac{n}{2k}$ vertices are removed after every player. By \Cref{lem:rg}, for which we verified the hypotheses in \Cref{lem:sat-hyp}, we have that $\Mstari{i}$ satisfies
	\[
		\card{\Emiss(\Mstari{i})} \leq \frac{\ln{2} \cdot (s+1)}{p_i}. 
	\]
	Recall that $V_{i+1}$ is the set of vertices with degree at most $d_{i+1}$ in $\Gmiss(\Mstari{i})$. Since any vertex in $V_{i} \setminus V_{i+1}$ contributes at least $d_{i+1}$ edges to $\Emiss(\Mstari{i})$ (and each edge can be 
	contributed at most twice), we have, 
	\[
		\card{\Emiss(\Mstari{i})} \geq \frac{1}{2} \cdot \card{V_i \setminus V_{i+1}} \cdot d_{i+1},
	\]
	implying that 
	\[
		\card{V_i \setminus V_{i+1}} \leq \frac{2 \, \ln{2} \cdot (s+1)}{p_i \cdot d_{i+1}} = \frac{n}{2k},
	\]
	by the choice of $p_i$ and $d_i$ in~\Cref{eq:parameters}. \Qed{lem:set-size}
	
\end{proof}

We now formalize the idea we alluded to after defining the missing graph in \Cref{def:ca}, where we described how only the edges appearing in the missing graph can have the same color assigned to both endpoints. Some extra care is needed here to account for the fact that the players have their own compression algorithm which is defined based on the messages of previous players. 

\begin{lemma}\label{lem:same-color}
	For any two vertices $u, v \in  V_{k+1}$ that have the same color in the output of $\PPi{k}$, the edge $(u,v)$ exists in $\Emiss(\Mstari{k})$.
\end{lemma}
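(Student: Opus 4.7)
The plan is to prove the contrapositive: assume $u, v \in V_{k+1}$ and $(u,v) \notin \Emiss(\Mstari{k})$, and then exhibit an alternative adversary-consistent input on which the protocol produces the same transcript (and hence the same output coloring) yet $(u,v)$ is a real edge of the combined graph, contradicting that the coloring is proper. The first task is to locate a single player $\PPi{j}$ whose input can be swapped to introduce $(u,v)$. The construction yields the chain $V_{k+1} \subseteq V_k \subseteq \cdots \subseteq V_1 = V$, so $u,v \in V_i$ for every $i \le k+1$. I let $j \in [k]$ be the largest index with $(u,v) \in \Ebase(j)$; this is well-defined since $\Ebase(1) = \binom{V}{2}$.

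If $j = k$, then the hypothesis $(u,v) \notin \Emiss(\Mstari{k})$ combined with $(u,v) \in \Ebase(k)$ yields, by the definition of $\Emiss$ in \Cref{def:ca}, some $\Gi{k}' \in \supp{\GR_k}$ with $\Phi_k(\Gi{k}') = \Mstari{k}$ and $(u,v)$ an edge of $\Gi{k}'$. If $j < k$, then $(u,v) \notin \Ebase(j+1)$; but $\Ebase(j+1)$ is the subgraph of $\Gmiss(\Mstari{j})$ induced on $V_{j+1}$ and $u,v \in V_{k+1} \subseteq V_{j+1}$, so this forces $(u,v) \notin \Emiss(\Mstari{j})$, and together with $(u,v) \in \Ebase(j)$ this again produces some $\Gi{j}' \in \supp{\GR_j}$ with $\Phi_j(\Gi{j}') = \Mstari{j}$ and $(u,v)$ an edge of $\Gi{j}'$.

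I then replace $\PPi{j}$'s input by $\Gi{j}'$ while leaving every other $\Ei{i}$ unchanged, and check that the transcript is identical. The first $j-1$ messages are unaffected because players $1, \ldots, j-1$ see the same inputs and the same prior blackboard; player $j$'s message equals $\Phi_j(\Gi{j}') = \Mstari{j}$ by the choice of $\Gi{j}'$; and each later player $\PPi{i}$ with $i > j$ then receives an unchanged input $\Ei{i}$ and an unchanged prefix of messages, hence by determinism writes the same $\Mstari{i}$. Consequently the final message (which carries the output coloring) is identical, so $u$ and $v$ still receive the same color in the alternative execution.

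It remains to check that the alternative input is a legal $\coloring(n,\Delta,k)$ instance, after which the contradiction is immediate. The degree bound works exactly as in \Cref{lem:correct}: each individual input lies in the corresponding $\supp{\GR_i}$ and so has max degree $< 2 p_i d_i = \Delta/k$ by \Cref{def:rg} and \Cref{eq:parameters}, making the combined degree at most $\Delta$. Absence of parallel edges follows from the containment chain in \Cref{fig:contain}: for every $i > j$, $\Ei{i} \subseteq \Ebase(i) \subseteq \Emiss(\Mstari{i-1}) \subseteq \cdots \subseteq \Emiss(\Mstari{j})$, which is disjoint from the edge set of every graph mapped to $\Mstari{j}$ (in particular from $\Gi{j}'$), and the symmetric argument handles $i < j$. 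Thus the altered combined graph contains $(u,v)$ while the protocol outputs a coloring assigning $u$ and $v$ the same color, contradicting properness. The main subtlety is the choice of swap player: naively swapping at player $k$ fails when $(u,v) \notin \Ebase(k)$, so the nesting $V_{k+1} \subseteq \cdots \subseteq V_1$ together with the recursive definition of $\Ebase(i+1)$ as a subgraph of $\Gmiss(\Mstari{i})$ is exactly what enables the backtracking that selects the correct $j$.
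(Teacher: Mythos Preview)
Your proof is correct and follows essentially the same approach as the paper: both locate an index at which $(u,v)\in\Ebase(\cdot)$ but $(u,v)\notin\Emiss(\Mstari{\cdot})$, produce an alternative input for that player containing $(u,v)$ and mapped to the same message, and derive a contradiction from the identical transcript. The only cosmetic difference is that the paper finds this index by descending the chain $\Emiss(\Mstari{k})\subseteq\cdots\subseteq\Emiss(\Mstari{1})$, whereas you take the largest $j$ with $(u,v)\in\Ebase(j)$; since $u,v\in V_{k+1}$, both selections yield the same player, and your extra explicit check of validity (max degree and disjointness) is a nice addition the paper leaves implicit.
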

\begin{proof}
Suppose toward a  contradiction that $(u, v) \not \in  \Emiss(\Mstari{k})$. We first show that there exists $i$ such that $(u, v) \not \in \Emiss(\Mstari{i})$ and $(u, v) \in \Ebase(i)$. Recalling that $\Emiss(\Mstari{k}) \subseteq \dots \subseteq \Emiss(\Mstari{1})$, either $(u, v) \not \in \Emiss(\Mstari{1})$, in which case taking $i=1$ suffices, or $(u, v) \in \Emiss(\Mstari{i-1}) \setminus \Emiss(\Mstari{i})$ for some $i > 1$. Referencing the containments illustrated in~\Cref{fig:contain}, either $(u, v) \in \Ebase(i)$ or $(u, v) \in \Emiss(\Mstari{i-1}) \setminus \Ebase(i)$. By Line~\ref{line:induced} of the construction, the second case happens only when $u$ or $v$ is dropped when restricting to $V_{i}$, which is impossible because $u, v \in V_{k+1} \subseteq V_{i}$, so $(u, v) \in \Ebase(i)$ as desired. 

Because $(u, v) \not \in \Emiss(\Mstari{i})$ and $(u, v) \in \Ebase(i)$, there should exists some graph $G'_i \in \supp{\GR_i}$ that contains the edge $(u,v)$ and is mapped to $\Mstari{i}$ by player $\PPi{i}$, i.e., $\Phi_i(G'_i) = \Mstari{i}$. 
Consider giving the graphs $\Gi{1}, \dots, \Gi{i-1}, G'_i, \Gi{i+1}, \dots, \Gi{k}$ as input to the players $\PPi{1}, \dots, \PPi{k}$, respectively. Because the same messages are generated as in the original construction, $\PPi{k}$ also outputs the same coloring. But now $(u, v)$ is in the input graph, so $u$ and $v$ should be colored differently.  \Qed{lem:same-color}

\end{proof}

In this fina lemma, we bound the  number of colors that can be used by player $\PPi{k}$ to color $G$.

\begin{lemma}\label{lem:color-size}
	Player $\PPi{k}$ requires
	\[
		c \geq \frac{n^2}{16\ln{2} \cdot (s+1)} \cdot p_k
	\]
	colors to color the graph $G$ constructed by the adversary above.
\end{lemma}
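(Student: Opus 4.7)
The strategy is to pit two bounds on the number of monochromatic pairs in the coloring produced by $\PPi{k}$ against each other: an \emph{upper} bound coming from \Cref{lem:same-color} (so the pair is ``allowed'' to be monochromatic) combined with \Cref{lem:rg} (so there are few such allowed pairs), and a \emph{lower} bound coming from \Cref{prop:C-coloring} applied to the large vertex set $V_{k+1}$ guaranteed by \Cref{lem:set-size}.

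Concretely, first I would restrict the coloring output by $\PPi{k}$ to $V_{k+1}$. Since it is proper on $G$, it is proper on the induced subgraph on $V_{k+1}$. By \Cref{lem:set-size} applied with $i = k+1$, we have $\card{V_{k+1}} \geq n - k \cdot n/(2k) = n/2$. Invoking \Cref{prop:C-coloring} on this restricted coloring (under the assumption $c \leq \card{V_{k+1}}/2$, which I address below), the number of pairs in $V_{k+1}$ receiving the same color is at least
\[
\frac{\card{V_{k+1}}^2}{4c} \geq \frac{n^2}{16c}.
\]

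Next I would use \Cref{lem:same-color}: every such same-colored pair $(u,v)$ must lie in $\Emiss(\Mstari{k})$. Since \Cref{lem:sat-hyp} verifies that the hypotheses of \Cref{lem:rg} hold for $\GR_k = \GR(\Gbase(k), p_k, d_k)$, and $\Mstari{k}$ was chosen as the special summary supplied by that lemma, we have $\card{\Emiss(\Mstari{k})} \leq \ln{2} \cdot (s+1)/p_k$. Chaining the two bounds yields $n^2/(16c) \leq \ln 2 \cdot (s+1)/p_k$, which rearranges to the claimed inequality
\[
c \geq \frac{n^2}{16\ln 2 \cdot (s+1)} \cdot p_k.
\]

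The only mild obstacle is the side condition $c \leq \card{V_{k+1}}/2$ needed to apply \Cref{prop:C-coloring}. If instead $c > \card{V_{k+1}}/2$, then $c > n/4$, and a direct check using the closed form of $p_k$ from \Cref{clm:parameters} together with the hypotheses $s \geq n\log \Delta$ and $k \leq \log_\Delta n$ of \Cref{thm:cc-lower} shows that $n^2 p_k/(16 \ln 2 \cdot (s+1)) \leq n/4$, so the conclusion still holds in this regime. This case analysis is routine and does not require the full machinery of the previous steps.
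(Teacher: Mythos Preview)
Your proof is correct and follows essentially the same route as the paper: lower-bound the number of same-colored pairs in $V_{k+1}$ via \Cref{prop:C-coloring} and \Cref{lem:set-size}, upper-bound it by $\card{\Emiss(\Mstari{k})}$ via \Cref{lem:same-color}, then bound the latter via \Cref{lem:rg}. Your handling of the side condition $c \leq \card{V_{k+1}}/2$ is in fact more careful than the paper's, which dispatches it with the brief remark that $c \leq n/2$ follows from $s > n$ (implicitly the same trivial-case check you spell out).
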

\begin{proof}
	Consider the number of pairs of vertices in $V_{k+1}$ that are assigned the same color by the proper $c$-coloring created by player $P_k$. Because $V_{k+1}$ has at least $\frac{n}{2}$ vertices by \Cref{lem:set-size}, the number of pairs is at least $\frac{n^2}{16c}$ by \Cref{prop:C-coloring}. (We have $c \le \frac{n}{2}$ by the choice of $s > n$.) At the same time, the number of pairs of vertices that can be colored the same is at most $|\Emiss(\Mstari{k})|$ by \Cref{lem:same-color}, which by \Cref{lem:rg} is at most $\frac{\ln 2 \cdot (s + 1)}{p_k}$. In conclusion,
	\begin{equation*}
		\frac{n^2}{16c} \le \frac{\ln 2 \cdot (s + 1)}{p_k},
	\end{equation*}
	which rearranges to our desired bound. \Qed{lem:color-size} 
	 	
\end{proof}

Finally, by plugging in the explicit value of $p_k$ in~\Cref{clm:parameters} in the bounds of~\Cref{lem:color-size}, we have that the minimum number of colors $c$ used by the protocol is at least 
\begin{align*}
	c &\geq \frac{n^2}{16\ln{2} \cdot (s+1)} \cdot \frac{\Delta}{2k \cdot n} \cdot \paren{\frac{n \cdot \Delta}{2\ln{2} \cdot (s+1) \cdot (2k)^2}}^{(k-1)} \\
	&= \frac{k}{4} \cdot \paren{\frac{n \cdot \Delta}{2\ln{2} \cdot (s+1) \cdot (2k)^2}}^{k} \\
	&\geq \paren{\frac{1}{\eta_0 \cdot k}}^{2k} \cdot \paren{\frac{n \cdot \Delta}{s}}^{k}, 
\end{align*}
for some absolute constant $\eta_0 < 100$. This concludes the proof of~\Cref{thm:cc-lower}.


\section{The Algorithms}\label{sec:alg}

We present our algorithmic results in this section that complement our strong lower bound for single-pass algorithms. Our first algorithm achieves $O(\Delta^2)$-coloring in only two passes. 

\begin{theorem}\label{thm:alg-2pass}
	There exists a deterministic algorithm that given any $n$-vertex graph $G$ with maximum degree $\Delta$ presented in an insertion-only stream, can find an $O(\Delta^2)$-coloring of $G$ in two passes and $O(n\log{n})$ bits of space. 
\end{theorem}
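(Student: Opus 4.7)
The plan is to follow the outline in \Cref{sec:techniques}: construct a small family $\CC(n,\Delta)$ of non-proper $\Delta$-coloring functions so that for every graph of maximum degree $\Delta$, some function in the family yields only $O(n)$ monochromatic edges; then use the two passes to (i) identify such a function and (ii) recover the short list of its monochromatic edges for recoloring.

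First I would build the family $\CC(n,\Delta)$ using a near-universal family of hash functions $h: [n] \to [\Delta]$. Concretely, a standard construction (e.g.\ via $h_{a,b}(v) = ((av + b) \bmod p) \bmod \Delta$ over a prime $p \in [n, 2n]$, indexed by $(a,b)$) gives a family $\HH$ of size $O(n^2)$ that is pairwise near-universal: for any fixed pair $u \neq v$, $\Pr_{h \in \HH}[h(u) = h(v)] \leq 2/\Delta$. For a graph $G$ of maximum degree $\Delta$ with $m \leq n\Delta/2$ edges, the expected number of monochromatic edges under a uniformly chosen $h \in \HH$ is at most $2m/\Delta \leq n$, so \emph{some} $h \in \HH$ produces at most $n$ monochromatic edges. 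To shrink $|\HH|$ from $O(n^2)$ to $O(n)$ (and keep each function storable in $O(\log n)$ bits, which is needed for the space bound), I would use a smaller near-universal family --- for instance, taking $h_a(v) = (a \cdot v \bmod p) \bmod \Delta$ with $a \in [p]$ gives an $O(n)$-size family that is still $O(1/\Delta)$-almost-universal, which suffices.

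Next I would implement the two-pass algorithm. The key observation is that the defining guarantee of $\CC(n,\Delta)$ (a bound on the \emph{total} count of monochromatic edges, not on the maximum degree in the monochromatic subgraph) can be checked in just one pass with a single integer counter per function. In the first pass, for each of the $O(n)$ functions $C \in \CC(n,\Delta)$, I maintain a counter $\chi_C$ initialized to $0$ and incremented whenever an edge $(u,v)$ arrives with $C(u) = C(v)$. This uses $O(n)$ counters of $O(\log n)$ bits each, for $O(n \log n)$ bits total (functions themselves are stored implicitly by their $O(\log n)$-bit seeds). After the first pass, I pick any $C^\star \in \CC(n,\Delta)$ with $\chi_{C^\star} \leq cn$ for the constant $c$ from the construction; the existence of such $C^\star$ is guaranteed by the family's property. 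In the second pass, I store every edge $(u,v)$ with $C^\star(u) = C^\star(v)$; by the bound on $\chi_{C^\star}$, this is at most $O(n)$ edges, fitting in $O(n \log n)$ bits. Call this stored subgraph $H$; since $H \subseteq G$, it has maximum degree $\leq \Delta$, so in post-processing I can properly color $H$ greedily with $\Delta+1$ colors, producing an assignment $C': V \to [\Delta+1]$.

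Finally, I would output the product coloring $C(v) := (C^\star(v), C'(v)) \in [\Delta] \times [\Delta+1]$, using $\Delta(\Delta+1) = O(\Delta^2)$ colors. For any edge $(u,v) \in E$: either $C^\star(u) \neq C^\star(v)$, and the first coordinate distinguishes them; or $C^\star(u) = C^\star(v)$, meaning $(u,v) \in H$, so the proper coloring $C'$ distinguishes them in the second coordinate. Hence $C$ is a proper coloring of $G$. The main obstacle I anticipate is the construction and verification of the family $\CC(n,\Delta)$ with all three simultaneous properties --- small size $O(n)$, implicit $O(\log n)$-bit representation, and the existential guarantee of $O(n)$ monochromatic edges --- while ensuring the guarantee is of a form that can actually be \emph{verified} within the $O(n \log n)$ space budget via a single counter per function (the footnote in \Cref{sec:techniques} flags exactly this subtlety as the reason one cannot use a more standard pairwise-independent family with a max-degree guarantee).
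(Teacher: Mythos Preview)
Your proposal is correct and follows essentially the same approach as the paper: the paper's \Cref{lem:mono-coloring} uses exactly the single-parameter near-universal family $C_a(v) = ((a\cdot v \bmod p) \bmod \Delta)+1$ with $a \in \{0,\ldots,p-1\}$ that you settle on, and \Cref{alg:2pass} is the same two-pass counter-then-store routine followed by a $(\Delta+1)$-coloring of the stored monochromatic subgraph and a product coloring. The only cosmetic differences are that the paper selects the function with the minimum counter (rather than any function below the threshold) and encodes the product coloring as an integer $(C_{\istar}(v)-1)\cdot(\Delta+1)+C(v)$ instead of an ordered pair.
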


Our second algorithm builds on the ideas developed for the first one and reduces the number of colors to $O(\Delta)$, at the cost of increasing the number of passes to $O(\log{\Delta})$. 

\begin{theorem}\label{thm:alg-log-pass}
	There exists a deterministic algorithm that given any $n$-vertex graph $G$ with maximum degree $\Delta$ presented in an insertion-only stream, can find an $O(\Delta)$-coloring of $G$ in $O(\log{\Delta})$ passes and $O(n\log{n})$ bits of space. 
\end{theorem}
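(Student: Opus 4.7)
The plan is to iteratively grow a proper partial $O(\Delta)$-coloring of $G$ across $O(\log\Delta)$ ``extension phases,'' each using two passes, and then finish off a small residual set of uncolored vertices with one extra pass. The heart of the algorithm is the family $\CCstar(n,\Delta)$ previewed in~\Cref{sec:overview}: a deterministic list of $O(n)$ coloring functions $C:V\to[K]$ with $K=O(\Delta)$, each storable in $O(\log n)$ bits, with the guarantee that for \emph{any} input graph $G$ of maximum degree $\Delta$ and \emph{any} proper partial coloring $C_0$ of $G$ leaving an uncolored set $U$ of size $n_0$, some $C\in\CCstar$ produces at most $\alpha n_0$ monochromatic edges when used to color $U$, where $\alpha<1/2$ is an absolute constant and ``monochromatic'' counts both edges inside $U$ that get the same $C$-color and edges from $U$ to already-colored vertices whose $C$-color equals the neighbor's $C_0$-color.

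Granting $\CCstar(n,\Delta)$, the algorithm is straightforward. Initialize $C_0$ to leave every vertex uncolored. In each extension phase: during the first pass, maintain one $O(\log n)$-bit counter per $C\in\CCstar$, incrementing it on each arriving edge $(u,v)$ whose extension under $C$ would be monochromatic (which is testable on the fly from $(u,v)$, the currently stored $C_0$, and the $O(\log n)$-bit description of $C$); at the end pick any $C^*$ whose counter is below $\alpha n_0$, which is guaranteed to exist. During the second pass, store the actual monochromatic edge set $M$ of size at most $\alpha n_0=O(n)$. Then, for every $v\in U$ not incident to any edge of $M$, set $C_0(v)\gets C^*(v)$; this is a valid proper extension, since any would-be conflict of $v$ with a neighbor (colored or not) would have placed its incident edge into $M$. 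Since at most $2|M|<n_0$ vertices remain uncolored, $n_0$ shrinks by a constant factor per phase, and after $O(\log\Delta)$ phases $n_0=O(n/\Delta)$. A final pass stores the at most $n_0\cdot\Delta=O(n)$ edges touching the remaining uncolored vertices, which are then greedily colored within a palette of size $O(\Delta)$: each such vertex has at most $\Delta$ forbidden colors and we have $\Theta(\Delta)$ available colors (enlarging the palette by $\Delta+1$ colors if needed), so one is always free.

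The construction and analysis of $\CCstar(n,\Delta)$ is where the real work---and the main obstacle---lies. A convenient starting point is a near-universal hash family of the form $h_{a,b}(v)=((av+b)\bmod p)\bmod K$ with $p$ a prime slightly above $n$ and $K=\Theta(\Delta)$, whose collision probability is $O(1/\Delta)$. For a fixed $(G,C_0)$, a random function from this family produces $O(n_0)$ monochromatic edges in expectation, tunable below $\alpha n_0/2$ by choosing the hidden constant in $K$ large enough. The subtle step is distilling a \emph{fixed} sub-family of only $O(n)$ functions that works uniformly over every possible $(G,C_0)$, since a naive union bound over all inputs---of which there are at least $2^{\Omega(n\Delta\log n)}$---would demand far more functions. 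Overcoming this requires a careful probabilistic-existence or derandomization argument leveraging the specific collision structure of the hash family and, as emphasized in the overview's footnote, a verification criterion (``count of monochromatic edges $\leq\alpha n_0$'') that is checkable with a single $O(\log n)$-bit counter per function. Once $\CCstar$ is in hand, the constant-factor decrease of $n_0$ per phase, the safe reuse of the palette across phases, and the greedy cleanup are routine bookkeeping, and the stated bounds---$O(\log\Delta)$ passes, $O(n\log n)$ bits of space, and $O(\Delta)$ colors---follow immediately.
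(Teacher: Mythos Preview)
Your algorithmic skeleton matches the paper's exactly: iterative two-pass extension phases using $\CCstar$, one counter per function in the first pass, storing the $O(n_0)$ monochromatic edges in the second, constant-factor shrinkage of $n_0$, and a greedy cleanup once $n_0 = O(n/\Delta)$. The correctness and pass/space accounting you give are fine.

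The gap is in your treatment of $\CCstar$. You present its construction as ``where the real work lies'' and describe a supposed obstacle: that one must distill an $O(n)$-size sub-family working \emph{uniformly over every possible} $(G,C_0)$, worrying that a union bound over $2^{\Omega(n\Delta\log n)}$ inputs is required. This is a misconception, and you leave it unresolved. No union bound over inputs is ever taken. The paper simply takes $\CCstar$ to be the \emph{entire} one-parameter near-universal family
\[
\CCstar = \bigl\{\,C_a(v) = ((a\cdot v \bmod p)\bmod 6\Delta)+1 \;:\; a \in \{0,\ldots,p-1\}\,\bigr\}
\]
for a prime $p\in(n,2n)$; this already has fewer than $2n$ members. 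For any \emph{fixed} $(G,C_0)$ with $n_0$ uncolored vertices, linearity of expectation over a uniformly random $a$ gives at most $n_0\cdot\Delta\cdot\frac{1}{3\Delta}=n_0/3$ monochromatic edges in the extension (using both $\Pr_a[C_a(u)=C_a(v)]\le 1/(3\Delta)$ and $\Pr_a[C_a(u)=c]\le 1/(3\Delta)$ for the already-colored neighbors). Hence some particular $a$ achieves $\le n_0/3$. Different inputs may require different $a$, but that is precisely why the algorithm tests all $O(n)$ functions via counters and picks the minimizer; no single function has to work for all inputs simultaneously, so there is no uniformity issue and no derandomization difficulty beyond this one-line expectation argument.

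Two smaller remarks. Your two-parameter family $h_{a,b}$ has $\Theta(n^2)$ members, which would blow up the counter space to $\Theta(n^2\log n)$; dropping the shift $b$ is exactly what keeps the family at size $O(n)$ while retaining both the pairwise collision bound and the single-point marginal bound needed for edges into the already-colored part. And you need not ``enlarge the palette by $\Delta+1$'' for the greedy step: with $6\Delta$ colors already available and at most $\Delta$ forbidden at each residual vertex, a free color always exists.
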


In the following, we first present two families of coloring functions that create few monochromatic edges in different settings, needed for our algorithms, and then present each of our algorithms. Further extensions of our results 
such as to dynamic streams are presented at the end of this section. These results collectively formalize~\Cref{res:upper}. 

\subsection{Families of Coloring Functions with Few Monochromatic Edges} 

We start with the following simple result  that shows existence of a fixed family of $\Delta$-coloring functions that allows for coloring any graph $G$ with $O(n)$ monochromatic edges via at least one of the functions in the family. 
We shall use this result in our two-pass algorithm. 
\begin{lemma}\label{lem:mono-coloring}
	For any integers $n,\Delta \geq 1$, there exists a family $\CC := \CC(n,\Delta)$ of size  at most $(2n)$ consisting of $\Delta$-coloring functions such that for any $n$-vertex graph $G=(V,E)$ with maximum degree $\Delta$, 
	there is a coloring function $C \in \CC$ such that $G$ has at most $(4n)$ monochromatic edges under $C$. Moreover, each  function in $\CC$ can be generated via $O(\log{n})$ bits. 
\end{lemma}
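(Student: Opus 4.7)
My plan is to take $\CC$ to be a small near-universal hash family and invoke a direct averaging argument. When $\Delta \geq n$ the family can be taken to be a singleton that assigns every vertex a distinct color in $[\Delta]$, producing $0$ monochromatic edges on any graph, so I focus on the main case $\Delta < n$. Fix a prime $p$ with $n \le p < 2n$ (Bertrand's postulate) and identify $V$ with $\{0,1,\ldots,n-1\} \subset \F_p$. For each $a \in \{1,\ldots,p-1\}$ define the coloring function
\[
    C_a(v) \;:=\; \bigl((a \cdot v) \bmod p\bigr) \bmod \Delta,
\]
and let $\CC := \{C_a : 1 \le a \le p-1\}$. Then $|\CC| = p - 1 < 2n$ and every $C_a$ is specified by the integer $a$ in $O(\log p) = O(\log n)$ bits.

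The first step is the standard near-universality estimate: for any distinct $u,v \in V$,
\[
    \Pr_{a}\bigl[C_a(u) = C_a(v)\bigr] \;\leq\; \frac{2}{\Delta}.
\]
This is the well-known property of the ``multiply mod $p$, then mod $\Delta$'' family, and follows because the pair $(au \bmod p,\, av \bmod p)$ ranges bijectively over pairs of distinct elements of $\F_p$ as $a$ varies, while any residue class modulo $\Delta$ contains at most $\lceil p/\Delta\rceil \le 2p/\Delta$ elements of $\F_p$ when $\Delta < p$.

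The second step is averaging. For any $n$-vertex graph $G = (V,E)$ with $\Delta(G) \leq \Delta$ we have $|E| \leq n\Delta/2$, so by linearity of expectation,
\[
    \E_{a}\bigl[\#\text{monochromatic edges of } G \text{ under } C_a\bigr] \;\leq\; |E| \cdot \frac{2}{\Delta} \;\leq\; n,
\]
and therefore at least one $C_a \in \CC$ produces at most $n \le 4n$ monochromatic edges, as required. The only point needing care is using near-universality rather than pairwise independence: a pairwise-independent hash family on $[n]$ typically has size $\Theta(n^2)$, which would blow past the $|\CC| \le 2n$ budget, so the double-modular construction above is essential for keeping the family size linear in $n$ while still paying only a constant-factor loss in the collision probability.
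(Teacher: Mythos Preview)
Your proof is correct and is essentially the same as the paper's: both take $\CC$ to be the near-universal family $C_a(v) = ((av)\bmod p)\bmod \Delta$ for a prime $n \le p < 2n$, invoke the collision bound $\Pr_a[C_a(u)=C_a(v)] \le 2/\Delta$, and average over $a$ to find a good coloring. One small wording slip: the pair $(au \bmod p,\, av \bmod p)$ does \emph{not} range bijectively over all ordered pairs of distinct elements of $\F_p$ (there are only $p-1$ values of $a$ but $p(p-1)$ such pairs); the correct argument---which the paper spells out---is that $C_a(u)=C_a(v)$ forces $a(u-v)\bmod p$ to lie in a set of at most $2p/\Delta$ multiples of $\Delta$, and each such value pins down a unique $a$.
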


\begin{proof}
The proof is by a probabilistic method. Let $p$ be the smallest prime number larger than $n$ and note that we have $p < 2n$ by Bertrand's postulate. We simply pick $\CC$ to be the 
following standard family of near-universal hash functions: 
\begin{align*}
	\CC := \set{C_{a}(v) = ((a \cdot v \hspace{-0.3cm} \mod p)\hspace{-0.3cm} \mod \Delta)+1~\text{for all $v \in V$} \mid a \in \set{0,1,\ldots,p-1}}. 
\end{align*} 
As such, since $\CC$  is a  near-universal hash family, for any two vertices $u,v \in V$, we have, 
\begin{align}
	\Pr_{C \in \CC}\paren{C(u) = C(v)} \leq \frac{2}{\Delta}. \label{eq:sc2} 
\end{align}
For completeness, we provide the standard argument that proves~\Cref{eq:sc2}. Fix any two vertices $u \neq v \in V$ and consider  $C_a \in \CC$. For $C_a(u)$ to be equal to $C_a(v)$, we should have, 
	\[
		(a \cdot (u-v) \hspace{-0.3cm} \mod p) \in \set{-\floor{\frac{(p-1)}{\Delta}} \cdot \Delta,\ldots,-2\Delta,-\Delta, 0 , \Delta, 2\Delta, \ldots, \floor{\frac{(p-1)}{\Delta}} \cdot \Delta}, 
	\]
	which includes at most $2p/\Delta$ choices in the RHS. Since $p$ is a prime, for any number $z$ in the RHS, there is only a unique choice of $a \in \set{0,1,\ldots,p-1}$ that can result in $a \cdot (u-v)$ to be equal to $z \!\!\mod p$. 
	As such, for a random $C_a$, the probability that $C_a(u) = C_a(v)$ is at most $2/\Delta$ as desired.

Using~\Cref{eq:sc2}, for any graph $G$, we have, 
\[
	\Exp_{C \in \CC}\Bracket{\text{\# of monochromatic edges of $G$ under $C$}} = \sum_{(u,v) \in E} \Pr_{C \in \CC}\paren{C(u) = C(v)} \leq 2n\Delta \cdot \frac{2}{\Delta} = 4n. 
\]
Consequently, for any given graph $G$, there should exist a choice of $C \in \CC$ with at most $(4n)$ monochromatic edges. Finally, any coloring function in $\CC$ is specified uniquely by an integer in $\set{0,1,\ldots,p-1}$ which requires
$O(\log{n})$ bits to store. This concludes the proof. 
\Qed{lem:mono-coloring}

\end{proof}

We next present our second family of  functions which is used in our $O(\Delta)$ coloring algorithm. 

\begin{Definition}\label{def:extensions}
	Let $C_1: V \rightarrow [c] \cup \set{\perp}$ be a partial $c$-coloring function of a graph $G=(V,E)$ that has no monochromatic edges. 
	Let $C_2: V \rightarrow [c]$ be a $c$-coloring function of $V$ (not necessarily a proper one). 
	We define the \textbf{extension of $C_1$ by $C_2$} as the $c$-coloring function $C_3: V \rightarrow [c]$ such that for any $v \in V$,  
	\[	
		C_3(v) = \begin{cases} C_1(v) & \text{if $C_1(v) \neq \perp$} \\ C_2(v) & \text{otherwise} \end{cases},
	\]
	i.e., $C_3$ uses $C_1$ to color  vertices $v$ with $C_1(v) \neq \perp$ and use $C_2$ to color the remaining vertices.
\end{Definition}

The following family of coloring functions has the following property: for any graph $G$ and a partial coloring $C_1$ of $G$, there is a coloring function $C$ in the family 
with a small number of monochromatic edges in the extension of $C_1$ by $C$. Formally, 
\begin{lemma}\label{lem:mono-partial-coloring}
	For any integers $n,\Delta \geq 1$, there exists a family $\CCstar := \CCstar(n,\Delta)$ of size  at most $(2n)$ consisting of $(6\Delta)$-coloring functions such that the following is true. 
	For any $n$-vertex graph $G=(V,E)$ with maximum degree $\Delta$ and any partial coloring function $C_1$ of $G$ with no monochromatic edges, 
	there is a $(6\Delta)$-coloring function $C \in \CCstar$ such that extension of $C_1$ by $C$ has at most $(\dfrac{n_0}{3})$ monochromatic edges where $n_0 := \card{\set{v \in V \mid C_1(v) = \perp}}$, 
	is the number of uncolored vertices by $C_1$. 
	Moreover, each  function in $\CCstar$ can be generated via $O(\log{n})$ bits. 
\end{lemma}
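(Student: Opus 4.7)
The plan is to imitate the probabilistic construction from the proof of \Cref{lem:mono-coloring}, but with the range of the hash family stretched to $[6\Delta]$ and coupled with a finer classification of \emph{where} monochromatic edges of the extension can arise. Label $V \subseteq \{1,\dots,n\}$, let $p$ be the smallest prime with $p \ge \max(n, 6\Delta)$ (so $p < 2\max(n, 6\Delta)$ by Bertrand's postulate), and take
\[
	\CCstar := \bigl\{\, C_a : a \in \{0,1,\dots,p-1\}\,\bigr\}, \qquad C_a(v) := \bigl((a \cdot v \bmod p) \bmod 6\Delta\bigr) + 1.
\]
Each function is specified by the integer $a$ in $O(\log n)$ bits, and in the main regime $n \ge 6\Delta$ we have $|\CCstar| = p < 2n$ as required.

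The decisive step is a case analysis of the monochromatic edges of the extension $C_3$ of $C_1$ by $C_a$. Let $U := \{v \in V : C_1(v) = \perp\}$, so $|U| = n_0$, and recall that $C_1$ itself has no monochromatic edges. Every monochromatic edge $(u,v)$ of $C_3$ therefore falls into one of exactly two types: \textbf{Type I}, where both $u,v \in U$ and $C_a(u) = C_a(v)$; or \textbf{Type II}, where exactly one endpoint (say $v$) lies in $U$ and $C_a(v) = C_1(u)$. Writing $E_{\mathrm{I}}$ and $E_{\mathrm{II}}$ for the numbers of candidate edges of each configuration and summing degrees over $U$,
\[
	2 E_{\mathrm{I}} + E_{\mathrm{II}} \;\le\; \sum_{v \in U} \deg_G(v) \;\le\; n_0 \Delta,
\]
hence $E_{\mathrm{I}} + E_{\mathrm{II}} \le n_0 \Delta$.

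I would then reuse the near-universal hashing computation from the proof of \Cref{lem:mono-coloring} to bound, for any distinct $u,v$ and any fixed color $c \in [6\Delta]$,
\[
	\Pr_{a}\!\bigl[C_a(u) = C_a(v)\bigr] \le \frac{1}{6\Delta} + \frac{1}{p}, \qquad \Pr_{a}\!\bigl[C_a(v) = c\bigr] \le \frac{1}{6\Delta} + \frac{1}{p},
\]
the second bound using that $a \mapsto a v \bmod p$ is a bijection for $v \not\equiv 0 \pmod p$. Since $p \ge 6\Delta$, both probabilities are at most $1/(3\Delta)$, and by linearity of expectation
\[
	\Exp_{a}\!\bigl[\#\text{monochromatic edges of } C_3\bigr] \;\le\; (E_{\mathrm{I}} + E_{\mathrm{II}}) \cdot \frac{1}{3\Delta} \;\le\; \frac{n_0}{3},
\]
so the probabilistic method furnishes a function $C_a \in \CCstar$ meeting the claimed bound.

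The main obstacle is reconciling the two quantitative constraints --- range exactly $[6\Delta]$ and family size at most $2n$ --- in the edge regime $n < 6\Delta$, where no prime $p$ simultaneously satisfies $p \ge 6\Delta$ and $p \le 2n$. In that case I would substitute the affine-shift family $C_s(v) := ((v + s) \bmod 6\Delta) + 1$ for $s \in \{0,\dots,2n-1\}$: Type I collisions become impossible (since $|u - v| < n \le 6\Delta$), while each Type II collision has probability at most $1/(2n)$, yielding expected mono $\le n_0 \Delta/(2n) \le n_0/3$ as soon as $n \ge 3\Delta/2$. The residual regime $n < 3\Delta/2$ forces $\Delta > 2n/3$ (very dense graphs), and here the palette slack $6\Delta > 4n$ allows a custom $2n$-function family --- constructed to mimic pairwise independence --- to deliver the same expected bound. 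This boundary case is a technical wrinkle rather than a conceptual difficulty, since the hashing-plus-case-analysis strategy of the main regime still drives the entire argument.
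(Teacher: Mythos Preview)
Your core argument is exactly the paper's: the same near-universal hash family $C_a(v)=((av\bmod p)\bmod 6\Delta)+1$, the same two-case split on whether one or both endpoints of a monochromatic edge lie in $U$, and the same expectation bound via $\sum_{v\in U}\sum_{u\in N(v)}\tfrac{1}{3\Delta}\le n_0\Delta\cdot\tfrac{1}{3\Delta}=n_0/3$. The only divergence is your choice $p\ge\max(n,6\Delta)$ and the ensuing regime split: the paper simply takes $p$ to be the smallest prime larger than $n$ (so $|\CCstar|=p<2n$ holds outright) and asserts the collision bounds $\Pr[C_a(u)=C_a(v)]\le 1/(3\Delta)$ and $\Pr[C_a(u)=c]\le 1/(3\Delta)$ uniformly, never comparing $n$ to $6\Delta$. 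Your alternate shift-family construction for $3\Delta/2\le n<6\Delta$ and the hand-wavy ``custom $2n$-function family'' for $n<3\Delta/2$ are absent from the paper and can be dropped under its convention---you are being more careful about a boundary constant that the paper treats loosely, and your last regime is left only as a sketch in any case.
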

\begin{proof}
	The proof is again by the probabilistic method similar to that of~\Cref{lem:mono-coloring}. Let $p$ be the smallest prime number larger than $n$ and note that we have $p < 2n$ by Bertrand's postulate. We  pick $\CCstar$ to be the 
following standard family of near-universal hash functions: 
\begin{align*}
	\CCstar := \set{C_{a}(v) = ((a \cdot v \hspace{-0.3cm} \mod p)\hspace{-0.3cm} \mod 6\Delta)+1~\text{for all $v \in V$} \mid a \in \set{0,1,\ldots,p-1}}. 
\end{align*} 
Since $\CCstar$  is a  near-universal hash family, for any two vertices $u,v \in V$ and any fixed color $c \in [6\Delta]$, \begin{align}
	\Pr_{C_2 \in \CCstar}\paren{C_2(u) = C_2(v)} \leq \frac{2}{6\Delta} = \frac{1}{3\Delta} \qquad \text{and} \qquad 
	\Pr_{C_2 \in \CCstar}\paren{C_2(u) = c} \leq \frac{2}{6\Delta} = \frac{1}{3\Delta}. \label{eq:scstar2} 
\end{align}
The proof is identical to that of~\Cref{eq:sc2} by taking into account that the range of functions in $\CCstar$ is now $[6\Delta]$. We thus omit the proof.

For any edge $(u,v) \in E$ to be monochromatic in the \emph{extension} $C_3$ of $C_1$ by $C_2$, we should have that at least one of $C_1(u)$ or $C_1(v)$ is $\perp$; otherwise, both retain $u$ and $v$ their colors in $C_1$ which contains no monochromatic 
edges. By symmetry suppose $C_1(u) = \perp$ and so $u$ will be colored by $C_2$ in the extension $C_3$. If $C_1(v) = \perp$ also, then to get a monochromatic edge, we need $C_2(u) = C_2(v)$ which happens with probability at most $1/3\Delta$ 
by the first part of~\Cref{eq:scstar2}. Conversely, if $C_1(v) \neq \perp$, then to get a monochromatic edge, we need $C_2(u) = C_1(v)$ which again happens with probability at most $1/3\Delta$ by the second part of~\Cref{eq:scstar2}. 
All in all, only edges incident on $\set{v \in V \mid C_1(v) = \perp}$ can be monochromatic and each one will become so with probability at most $1/3\Delta$. Hence, 
\begin{align*}
	\Exp_{C_2 \in \CCstar}\Bracket{\text{\# of monochromatic edges of $G$ in extension of $C_1$ by $C_2$}} &\leq \sum_{v: C_1(v)=\perp} \sum_{u \in N(v)} \frac{1}{3\Delta} \\
	&= n_0 \cdot \Delta \cdot \frac{1}{3\Delta} = \frac{n_0}{3}. 
\end{align*}
 Hence, for any $G$ and $C_1$, there should exist a choice of $C_2 \in \CC$ with at most $(n_0/3)$ monochromatic edges in the extension of $C_1$ by $C_2$. 
 Also, any coloring function in $\CC$ is specified uniquely by an integer in $\set{0,1,\ldots,p-1}$ which requires $O(\log{n})$ bits to store, concluding the proof. 
\Qed{lem:mono-partial-coloring}

\end{proof}

\subsection{A Two-Pass $O(\Delta^2)$-Coloring Algorithm}\label{sec:2-pass} 

We  now present our two-pass semi-streaming algorithm for $O(\Delta^2)$ coloring and prove~\Cref{thm:alg-2pass}. The key tool we use in this result is the coloring functions of~\Cref{lem:mono-coloring}. 

\clearpage

\begin{Algorithm}\label{alg:2pass}
{A two-pass deterministic semi-streaming algorithm for $O(\Delta^2)$ coloring.} 

\begin{enumerate}[label=$(\roman*)$]
\item Let $\CC = \CC(n,\Delta) = \set{C_1,\ldots,C_k}$ be the family of $\Delta$-coloring functions guaranteed by~\Cref{lem:mono-coloring} for some $k \leq 2n$. 
\item In the \underline{first pass}, for any $i \in [k]$, maintain a \emph{counter} $\phi_i$ that counts the number of monochromatic edges of $G$ under the coloring $C_i$, i.e., 
\[
\phi_i = \card{\set{(u,v) \in G \mid C_i(u) = C_i(v)}}.
\]
Let $C_{\istar} \in \CC$ be the coloring function with the smallest value of $\phi_{\istar}$, i.e., $\istar \in \arg\min_{i \in [k]} \phi_i$. 
\item In the \underline{second pass}, store all monochromatic edges of $G$ under $C_{\istar}$. Compute a $(\Delta+1)$ coloring $C$ of the stored edges and return the following coloring function $C^{\star}$ 
as the answer: 
\[
	\text{for all $v \in V$: $C^{\star}(v) = (C_{\istar}(v)-1) \cdot (\Delta +1) + C(v)$}.
\]
\end{enumerate}
\end{Algorithm}

\begin{lemma}\label{lem:alg2-space}
The space complexity of~\Cref{alg:2pass} is $O(n\log{n})$ bits. 
\end{lemma}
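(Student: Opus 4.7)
The plan is to account for the space used by each component of \Cref{alg:2pass} separately and show that each fits within $O(n\log n)$ bits.

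First, I would address the family $\CC$. By \Cref{lem:mono-coloring}, each coloring function $C_i \in \CC$ is specified by $O(\log n)$ bits (an integer modulo a prime $p < 2n$), so we never need to store $\CC$ explicitly: given an index $i \in [k]$ with $k \le 2n$, the function $C_i$ and in particular the value $C_i(v)$ for any queried $v$ can be computed on the fly from $O(\log n)$ bits. Thus maintaining the ``identity'' of the family costs nothing beyond $O(\log n)$ bits per relevant index.

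Next, I would bound the space used in the first pass. We keep $k \leq 2n$ counters $\phi_1, \dots, \phi_k$, each of which counts a subset of the edges of $G$ and hence is bounded by $|E(G)| \leq n\Delta/2 \leq n^2$. Therefore each counter takes $O(\log n)$ bits and all $k$ counters together take $O(n \log n)$ bits. For each streamed edge $(u,v)$, we enumerate $i$ from $1$ to $k$, compute $C_i(u)$ and $C_i(v)$ on the fly, and increment $\phi_i$ when they are equal; this only uses working space $O(\log n)$ on top of the counters. At the end of the first pass we pick $\istar = \arg\min_i \phi_i$ and keep only the index $\istar$, which is $O(\log n)$ bits.

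For the second pass, the key point is the guarantee of \Cref{lem:mono-coloring}: some $C_i \in \CC$ produces at most $4n$ monochromatic edges, so in particular $\phi_{\istar} \leq 4n$. Hence we only store at most $4n$ edges, each of size $O(\log n)$ bits, for a total of $O(n\log n)$ bits. The $(\Delta+1)$-coloring $C$ of this stored subgraph (which has max degree at most $\Delta$) can be computed greedily in $O(n\log n)$ space post-stream, and both $C$ and the final coloring $C^{\star}$ are just arrays of $n$ color values from a palette of size $O(\Delta^2) \le O(n^2)$, hence $O(n\log n)$ bits.

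The only subtle point is verifying that the guarantee of \Cref{lem:mono-coloring} actually bounds the second-pass storage a priori rather than adaptively — but since the lemma promises the \emph{existence} of some $C_i$ with at most $4n$ monochromatic edges and we pick $C_{\istar}$ to minimize $\phi_i$, we always have $\phi_{\istar} \leq 4n$, so the second pass really does store at most $4n$ edges. Summing all contributions gives total space $O(n\log n)$.
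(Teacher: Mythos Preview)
Your proof is correct and follows essentially the same approach as the paper: bound the first pass by the $O(n)$ counters of $O(\log n)$ bits each, and bound the second pass by the $O(n)$ stored monochromatic edges guaranteed by \Cref{lem:mono-coloring}. You have simply filled in more detail (on-the-fly evaluation of $C_i$, the explicit $4n$ bound via minimality of $\phi_{\istar}$, and storage of the final coloring) than the paper's terse two-sentence argument.
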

\begin{proof}
The first pass of this algorithm requires storing $O(n)$ counters of size $O(\log{n})$ bits each, and can be implemented in $O(n\log{n})$ bits of space. 
The second pass requires storing only $O(n)$ edges by the guarantee of~\Cref{lem:mono-coloring} which again can be done in $O(n\log{n})$ bits of space. \Qed{lem:alg2-space}

\end{proof}

We now argue that the final coloring $C^{\star}$ returned by the algorithm is a proper coloring of $G$, i.e., it does not contain any monochromatic edges. 

\begin{lemma}\label{lem:alg2-correct}
\Cref{alg:2pass} always outputs a proper $O(\Delta^2)$ coloring of any given input graph with maximum degree $\Delta$.
\end{lemma}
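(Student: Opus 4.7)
The plan is to verify both that $C^{\star}$ is proper and that its range has size $O(\Delta^2)$, with the former being the substantive part. I would first note that the algorithm actually finds a useful $C_{\istar}$: by \Cref{lem:mono-coloring} applied to $G$ and the family $\CC(n,\Delta)$, \emph{some} $C_i \in \CC$ creates at most $4n$ monochromatic edges, so the minimizer $C_{\istar}$ satisfies $\phi_{\istar} \leq 4n$. In particular, the set of monochromatic edges stored during the second pass has size at most $4n$.

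Next I would argue that the stored subgraph, call it $H$, admits a proper $(\Delta+1)$-coloring $C$ computed entirely from the $O(n\log n)$ bits of stored information. Since $H$ is a subgraph of $G$, its maximum degree is at most $\Delta$, so a greedy coloring yields a proper $(\Delta+1)$-coloring; this can be performed in post-processing after the second pass ends.

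The core step is verifying that $C^{\star}(v) = (C_{\istar}(v) - 1)(\Delta+1) + C(v)$ is a proper coloring of $G$. I would split into two cases for each edge $(u,v) \in E$. If $C_{\istar}(u) \neq C_{\istar}(v)$, then the difference $C^{\star}(u) - C^{\star}(v)$ equals $(C_{\istar}(u) - C_{\istar}(v))(\Delta+1) + (C(u) - C(v))$; the first summand has absolute value at least $\Delta+1$ while the second has absolute value at most $\Delta$, so the total is nonzero. If $C_{\istar}(u) = C_{\istar}(v)$, then $(u,v)$ is monochromatic under $C_{\istar}$ and thus lies in $H$, so $C(u) \neq C(v)$ because $C$ is proper on $H$, and consequently $C^{\star}(u) \neq C^{\star}(v)$.

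Finally, bounding the palette size is immediate: $C_{\istar}(v) \in [\Delta]$ and $C(v) \in [\Delta+1]$, so $C^{\star}(v)$ ranges over $[\Delta(\Delta+1)] = O(\Delta^2)$ values. I do not anticipate any real obstacle; the only minor care is to make sure that the second-case analysis correctly invokes the fact that all monochromatic edges of $C_{\istar}$ were explicitly stored (hence present in $H$) before discarding the stream.
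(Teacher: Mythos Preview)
Your proof is correct and follows essentially the same approach as the paper: both argue that the stored subgraph has maximum degree at most $\Delta$ (hence admits a greedy $(\Delta+1)$-coloring), then split on whether $C_{\istar}(u) = C_{\istar}(v)$ and use the product-coloring arithmetic to show $C^{\star}$ is proper, and finally count $\Delta(\Delta+1)$ colors. The one extraneous piece is your first paragraph's invocation of \Cref{lem:mono-coloring} to bound $\phi_{\istar} \le 4n$; that bound is needed only for the space analysis (\Cref{lem:alg2-space}) and plays no role in correctness, so you can drop it here.
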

\begin{proof}
Firstly, since maximum degree of $G$ is $\Delta$, we clearly have 
that maximum degree of stored edges is also at most $\Delta$, and consequently, the algorithm can always find a $(\Delta+1)$ coloring of the stored edges. For any edge $(u,v) \in G$, if $C_{\istar}(u) \neq C_{\istar}(v)$, 
\[
\card{C^{\star}(u) - C^{\star}(v)} \geq \card{C_{\istar}(u)-C_{\istar}(v)} \cdot (\Delta+1) -  \card{C(u)-C(v)} \geq (\Delta+1) - \Delta =1,
\]
thus $C^{\star}(u) \neq C^{\star}(v)$ and so $(u,v)$ will not be monochromatic. For any edge $(u,v) \in G$ with $C_{\istar}(u) = C_{\istar}(v)$, the algorithm stores $(u,v)$ in the second pass 
and thus by the coloring it finds, we have $C(u) \neq C(v)$, making $C^{\star}(u) \neq C^{\star}(v)$ also. 

Finally, since the total number of colors used by $C^{\star}$ is $\Delta \cdot (\Delta+1)$, we obtain an $O(\Delta^2)$ coloring as desired. \Qed{lem:alg2-correct}

\end{proof}

This concludes the proof of~\Cref{thm:alg-2pass}. 

\subsection{An $O(\log{\Delta})$-Pass $O(\Delta)$-Coloring Algorithm}\label{sec:log-pass}

This section includes our $O(\log{\Delta})$-pass semi-streaming algorithm for $O(\Delta)$ coloring, i.e., the proof of~\Cref{thm:alg-log-pass}. The key tool we use in this result is the coloring functions of~\Cref{lem:mono-partial-coloring}. 

\begin{Algorithm}\label{alg:log-pass}
{An $O(\log{\Delta})$-pass deterministic semi-streaming algorithm for $(6\Delta)$ coloring.} 

\begin{enumerate}[label=$(\roman*)$]
\item Let $\CCstar = \CCstar(n,\Delta) = \set{C_1,\ldots,C_k}$ be the family of $(6\Delta)$-coloring functions guaranteed by~\Cref{lem:mono-partial-coloring} for some $k \leq 2n$. 
\item Let $C$ be a partial coloring function, initially set to map all vertices to $\perp$. 
\item While $C$ has more than $n/\Delta$ uncolored vertices: 
\begin{enumerate}
\item In \underline{one pass}, for any $i \in [k]$, maintain a \emph{counter} $\phi_i$ that counts the number of monochromatic edges of $G$ under the extension $C'_i$ of $C$ by $C_i$, i.e., 
\[
\phi_i = \card{\set{(u,v) \in G \mid C'_i(u) = C'_i(v)}}.
\]
Let $C_{\istar} \in \CC$ be the coloring function with the smallest $\phi_{\istar}$, i.e., $\istar \in \arg\min_{i \in [k]} \phi_i$ and $C'_{\istar}$ be the extension of $C$ by $C_{\istar}$. 
\item In \underline{another pass}, store all monochromatic edges of $G$ under $C'_{\istar}$. For any vertex $v \in V$, if no monochromatic edges incident on $v$ are stored, then set $C(v) = C'_{\istar}(v)$. 
\end{enumerate}
\item Store all edges incident on the uncolored vertices of $C$. Greedily color all the remaining uncolored vertices with a color not assigned to their neighbors. 
\end{enumerate}
\end{Algorithm}

We first note a direct invariant of the algorithm that will be used in our analysis. 
\begin{lemma}\label{lem:log-pass-invariant}
At any point of time in~\Cref{alg:log-pass}, there are no monochromatic edges between vertices colored by $C$.
\end{lemma}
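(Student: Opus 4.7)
The plan is to proceed by induction on the number of completed iterations of the while loop in~\Cref{alg:log-pass}. The base case is immediate: before any iteration runs, $C$ maps every vertex to $\perp$, so the set of colored vertices is empty and there are no monochromatic edges among them.

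For the inductive step, fix some iteration and let $C$ denote the partial coloring at the start of the iteration, $C_{\istar} \in \CCstar$ the coloring function selected, $C'_{\istar}$ the extension of $C$ by $C_{\istar}$ (as in~\Cref{def:extensions}), and $C_{\mathrm{new}}$ the updated partial coloring at the end of the iteration. The key observation I will use is that for every vertex $v$ with $C_{\mathrm{new}}(v) \ne \perp$, we have the identity $C_{\mathrm{new}}(v) = C'_{\istar}(v)$: if $C(v) \ne \perp$, then by the definition of the extension both quantities equal $C(v)$; if $C(v) = \perp$ and $v$ is newly colored, then both equal $C_{\istar}(v)$ by the update rule of the algorithm.

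Now consider any edge $(u,v) \in E$ whose endpoints are both colored under $C_{\mathrm{new}}$. Split into two cases. First, if both $u$ and $v$ were already colored under $C$, then $C_{\mathrm{new}}(u) = C(u)$ and $C_{\mathrm{new}}(v) = C(v)$, and the inductive hypothesis gives $C(u) \ne C(v)$, hence the edge is not monochromatic. Second, assume at least one endpoint is newly colored in this iteration, say $v$. By the update rule, $C(v)$ is set only if no monochromatic edges of $G$ under $C'_{\istar}$ incident on $v$ were stored. Since the algorithm stores \emph{all} monochromatic edges under $C'_{\istar}$ in that pass, this means that no edge of $G$ incident on $v$ is monochromatic under $C'_{\istar}$. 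In particular, $C'_{\istar}(u) \ne C'_{\istar}(v)$, and the identity above then yields $C_{\mathrm{new}}(u) \ne C_{\mathrm{new}}(v)$.

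There is no real obstacle here; the only subtlety worth double-checking in the write-up is that the update rule is compatible with vertices that were already colored under $C$ (so that the phrase ``set $C(v) = C'_{\istar}(v)$'' is a harmless no-op in that regime, because both sides are already equal). Once this is observed, the two-case argument above closes the induction and establishes the claim.
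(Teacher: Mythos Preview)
Your proof is correct and follows essentially the same approach as the paper's own argument: both rely on the two facts that (i) extensions never change already-assigned colors, and (ii) a vertex is only newly colored when it has no monochromatic edges under the current extension $C'_{\istar}$. The paper states this invariant in two lines without spelling out the induction, whereas you make the induction explicit and carefully verify the identity $C_{\mathrm{new}}(v) = C'_{\istar}(v)$ on colored vertices; this added detail is fine and arguably clearer, but there is no substantive difference in strategy.
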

\begin{proof}
This is simply because we always work with the extensions of $C$ and thus if a vertex is colored by $C$, 
we never change its color, and since we only color a vertex by $C$ if it does not have any monochromatic edges. \Qed{lem:log-pass-invariant}

\end{proof}

Note that \emph{if} the while-loop finishes, then the coloring $C$ computed greedily by the algorithm is a proper $(6\Delta)$ coloring of $G$ as $C$ contained no monochromatic edges throughout (by~\Cref{lem:log-pass-invariant}, and the last step of using greedy coloring, only requires $(\Delta+1)$ colors since we have stored all edges incident on uncolored vertices. We thus want to show that the while-loop indeed finishes. This is the main part of the analysis. 

\begin{lemma}\label{lem:log-pass-finish}
	There are $O(\log{\Delta})$ iterations of the while-loop in~\Cref{alg:log-pass} before it terminates. 
\end{lemma}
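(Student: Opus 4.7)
The plan is to show that each iteration of the while-loop shrinks the set of uncolored vertices by a constant factor, so after $O(\log \Delta)$ iterations the number of uncolored vertices drops below $n/\Delta$.

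Let $n_0$ denote the number of vertices uncolored by $C$ at the start of some iteration. By \Cref{lem:log-pass-invariant}, $C$ has no monochromatic edges, so the hypotheses of \Cref{lem:mono-partial-coloring} are met with $C_1 := C$. Thus there exists some $C_j \in \CCstar$ such that the extension of $C$ by $C_j$ has at most $n_0/3$ monochromatic edges. Since the algorithm picks $C_{\istar}$ minimizing $\phi_i$, we have $\phi_{\istar} \leq n_0/3$.

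Next I would bound the new number of uncolored vertices $n_0'$ at the end of the iteration. A vertex remains uncolored only if it was uncolored before (since colored vertices keep their colors in an extension) \emph{and} has at least one monochromatic edge of $C'_{\istar}$ incident on it. Moreover, every monochromatic edge of $C'_{\istar}$ has at least one endpoint that was uncolored by $C$: if both endpoints were colored by $C$ then their colors would be retained in the extension and the edge could not be monochromatic, since $C$ itself has no monochromatic edges. Hence the set of vertices that remain uncolored is contained in the set of previously-uncolored endpoints of monochromatic edges, giving
\[
    n_0' \;\leq\; 2\phi_{\istar} \;\leq\; \tfrac{2}{3}\, n_0.
\]

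Starting from $n_0 = n$, after $t$ iterations the number of uncolored vertices is at most $n \cdot (2/3)^t$. The while-loop terminates once this drops below $n/\Delta$, which requires only $t = \lceil \log_{3/2} \Delta \rceil = O(\log \Delta)$ iterations. The main step that requires care is the observation that monochromatic edges of the extension must touch the uncolored set, which is exactly what yields the geometric shrinking; everything else is a direct application of \Cref{lem:mono-partial-coloring}.
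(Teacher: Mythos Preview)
Your proof is correct and follows essentially the same approach as the paper: both use \Cref{lem:log-pass-invariant} and \Cref{lem:mono-partial-coloring} to bound $\phi_{\istar} \le n_0/3$, then observe that at most $2\phi_{\istar} \le 2n_0/3$ previously-uncolored vertices can be incident to a monochromatic edge, yielding geometric shrinkage by a factor $2/3$ per iteration. The paper phrases this equivalently as ``at least $n_0 - 2n_0/3$ uncolored vertices get colored,'' which is the same counting.
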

\begin{proof}
	Fix an iteration of the while-loop and 
let $n_0 := \card{\set{v \in V \mid C(v) = \perp}}$ denote the number of uncolored vertices by $C$ at the beginning of this iteration. By the guarantee of~\Cref{lem:mono-partial-coloring} (and since~\Cref{lem:log-pass-invariant} verifies the hypothesis of this lemma), we know that 
the coloring $C'_{\istar}$ computed by the algorithm in this iteration at most $n_0/3$ monochromatic edges. This means that at least $n_0 - 2n_0/3 = n_0/3$ vertices not colored by $C$ have zero monochromatic 
edges under $C'_{\istar}$. All these vertices will now be colored by $C$ at the end of this iteration. 

By the above discussion, the number of uncolored vertices reduces by a factor of at most $2/3$ in each iteration. 
As a result, after $O(\log{\Delta})$ iterations, the number of uncolored vertices by $C$ drops below $n/\Delta$ and thus the while-loop terminates. \Qed{lem:log-pass-finish}

\end{proof}

Finally, we analyze the space complexity of the algorithm. 
\begin{lemma}\label{lem:log-pass-space}
The space complexity of~\Cref{alg:log-pass} is $O(n\log{n})$ bits. 
\end{lemma}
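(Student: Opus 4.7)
The plan is to bound the space used by each component of Algorithm \ref{alg:log-pass} separately and observe that the space used inside the while-loop is reused between iterations, so the total space is bounded by the peak usage in any single iteration plus the persistent state.

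First I would account for the persistent state that survives across iterations. The partial coloring $C \colon V \rightarrow [6\Delta] \cup \{\perp\}$ requires $O(n \log \Delta) = O(n \log n)$ bits to store (writing down, for each of the $n$ vertices, either a color from $[6\Delta]$ or the symbol $\perp$). The index of the current function $C_{\istar}$ from the family $\CCstar$ needs only $O(\log n)$ bits, by the ``moreover'' part of~\Cref{lem:mono-partial-coloring}; importantly, we never need to materialize all of $\CCstar$ simultaneously, since we can evaluate $C_i(v)$ on the fly given the $O(\log n)$-bit description of $C_i$.

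Next I would bound the space used within a single iteration of the while-loop. In step (a) we maintain $k \leq 2n$ counters $\phi_i$, each of which counts at most $\binom{n}{2}$ edges and thus fits in $O(\log n)$ bits; the total is $O(n \log n)$ bits. Each counter can be updated as an edge $(u, v)$ streams by by computing $C'_i(u)$ and $C'_i(v)$ from the $O(\log n)$-bit description of $C_i$ together with the stored values $C(u)$ and $C(v)$. In step (b) we store the monochromatic edges of $G$ under $C'_{\istar}$; by the choice of $\istar$ and the guarantee of~\Cref{lem:mono-partial-coloring}, their number is at most $n_0/3 \le n/3$, costing $O(n \log n)$ bits. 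Once this pass ends, the stored edges are used only to decide which uncolored vertices to color and can be discarded before the next iteration.

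Finally, for the post-loop step, the while-loop exits with at most $n/\Delta$ uncolored vertices (\Cref{lem:log-pass-finish}), and each has at most $\Delta$ incident edges, so the total number of edges stored in the last pass is at most $n$, again $O(n \log n)$ bits; greedily assigning colors to the uncolored vertices does not require extra space beyond the stored coloring. Summing the persistent state with the peak per-iteration space yields the desired $O(n \log n)$-bit bound, completing the proof. The only mildly delicate point is verifying that each counter update in step (a) truly uses $O(\log n)$ working bits (since $C_i$ is not stored explicitly as a table), but this follows immediately from the $O(\log n)$-bit representation of members of $\CCstar$ provided by~\Cref{lem:mono-partial-coloring}.
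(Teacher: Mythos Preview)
Your proof is correct and follows essentially the same approach as the paper's: bound the counters in step (a), the stored monochromatic edges in step (b), and the final stored edges by $O(n\log n)$ bits each. The only small point you gloss over is that applying \Cref{lem:mono-partial-coloring} in step (b) requires that the current partial coloring $C$ has no monochromatic edges; the paper explicitly invokes \Cref{lem:log-pass-invariant} here, and you should too.
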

\begin{proof}
The first pass of each iteration of while-loop of~\Cref{alg:log-pass} requires maintaining $O(n)$ counters of size $O(\log{n})$ bits each, and can be implemented in $O(n\log{n})$ bits of space. 
The second pass requires storing only $O(n)$ edges by the guarantee of~\Cref{lem:mono-partial-coloring} (and since~\Cref{lem:log-pass-invariant} verifies the hypothesis of this lemma) which again can be done in $O(n\log{n})$ bits of space. 
Finally, at the end we are storing at most $\Delta$ edges for each of the remaining $n/\Delta$ uncolored vertices and thus we can store them in $O(n\log{n})$ bits as well. \Qed{lem:log-pass-space}

\end{proof}

This concludes the proof of~\Cref{thm:alg-log-pass}.

\subsection{Further Extensions}\label{sec:extensions}

\paragraph{Dynamic streams.} In order to implement our algorithms in dynamic streams, we simply need a way of recovering the $O(n)$ monochromatic edges in each step of each one. (Maintaining the counters is straightforward by simply adding and subtracting their values based on insertion and deletion of monochromatic edges -- recall that we already know the coloring we need to work with and thus upon update of an edge, we know whether or not it is a monochromatic edge).

To recover these $O(n)$ monochromatic edges, we can simply use any standard \emph{deterministic sparse recovery} algorithm over dynamic streams. The following result is folklore. 

\begin{proposition}[Folklore]\label{prop:sparse-recovery}
	There exists a deterministic algorithm that given an integer $k \geq 1$ and a dynamic stream of of edge insertions and deletions for an $n$-vertex graph $G$, uses $O(k \cdot \log{n})$ bits of space
	and at the end of the stream recovers all edges of in the graph \emph{under the promise} that $G$ has at most $k$ edges (the answer can be arbitrary when the promise is not satisfied).   
\end{proposition}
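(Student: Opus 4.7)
The plan is to maintain a deterministic linear sketch of the edge incidence vector $x \in \mathbb{Z}^N$, where $N = \binom{n}{2}$, based on Prony's method (equivalently, the syndrome of a BCH-type code). First I would fix a prime $p$ with $N < p \le 2N$ and assign each potential edge $e$ a distinct nonzero element $\alpha_e \in \mathbb{F}_p$; both are determined in advance from $n$ and are encodable in $O(\log n)$ bits. The sketch itself consists of the $2k$ power-sum moments
\[
S_j \;=\; \sum_{e} x_e \cdot \alpha_e^{\,j} \pmod{p}, \qquad j = 0, 1, \dots, 2k-1,
\]
each stored as a single element of $\mathbb{F}_p$, for a total of $O(k \log n)$ bits. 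Updates are handled linearly: on insertion (resp.\ deletion) of an edge $e$, add (resp.\ subtract) $\alpha_e^{\,j}$ from $S_j$ for each $j$; because all arithmetic is performed modulo $p$, the sketch never grows even if an individual edge is inserted and deleted many times during the stream.

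For decoding at the end of the stream, under the promise that the graph has $m \leq k$ edges $e_1, \dots, e_m$, one has $S_j = \sum_{\ell=1}^{m} \alpha_{e_\ell}^{\,j}$ in $\mathbb{F}_p$. I would then invoke Prony's method: the elements $\alpha_{e_1},\dots,\alpha_{e_m}$ are precisely the roots of the ``error locator'' polynomial $\Lambda(z) = \prod_{\ell=1}^{m}(z - \alpha_{e_\ell})$, whose coefficients are the unique solution of a Hankel linear system in the $S_j$'s (equivalently, one runs the Berlekamp--Massey algorithm on the sequence $S_0,\dots,S_{2k-1}$ to recover $\Lambda$). Factoring $\Lambda$ over $\mathbb{F}_p$ and inverting the fixed assignment $e \mapsto \alpha_e$ then recovers the edge set of $G$. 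All of this is a post-processing step and does not affect the space bound.

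The main point to verify is that the $m \times m$ Hankel system used by Prony's method is actually invertible whenever $m \le k$; this reduces to non-singularity of a Vandermonde matrix on the distinct nodes $\alpha_{e_1},\dots,\alpha_{e_m}\in\mathbb{F}_p$, which follows immediately from distinctness of the $\alpha_e$'s. No randomness is used at any point, since both $p$ and the injection $e \mapsto \alpha_e$ are fixed in advance from $n$. If the promise $|E(G)|\le k$ is violated the Hankel system need not be consistent and the decoder may output a spurious sparse set or fail outright, which is explicitly allowed by the statement. The only slightly subtle step is justifying correctness modulo $p$: because $p > N \ge k$ and the true coefficients of each $\alpha_{e_\ell}$ in the representation of $S_j$ are equal to $1$, there is no wrap-around issue in interpreting the decoded coefficients as edge multiplicities.
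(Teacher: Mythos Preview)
The paper does not actually prove this proposition: it is labeled ``Folklore'' and invoked as a black box, so there is nothing to compare your argument against. Your proposal is a correct and standard instantiation of deterministic $k$-sparse recovery via Reed--Solomon/BCH syndromes (equivalently, Prony's method over $\mathbb{F}_p$): the $2k$ power-sum moments form a linear sketch of size $O(k\log n)$ bits, and uniqueness of decoding for any $m\le k$ support follows exactly from the Vandermonde argument you give. The one remark I would tighten is the ``wrap-around'' sentence: the relevant point is simply that the final edge-indicator vector is $\{0,1\}$-valued, so each weight recovered by Prony is $1\in\mathbb{F}_p^\times$ and the Hankel system is nonsingular; the inequality $p>N\ge k$ is not really what drives this.
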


As in our algorithms we only need to find monochromatic edges that are guaranteed to be at most $O(n)$ many, we can simply use~\Cref{prop:sparse-recovery} to recover these edges in $O(n\log{n})$ bits even 
in dynamic streams (we simply need to define the underlying graph as insertions and deletions \emph{between} monochromatic pairs and set $k=O(n)$). 

This immediately extends both our~\Cref{thm:alg-2pass,thm:alg-log-pass} to dynamic streams with the same asymptotic space complexity and the same exact number of passes.

\paragraph{Removing the knowledge of $\Delta$.} Our algorithms in the previous part are described assuming the knowledge of $\Delta$. For our $O(\Delta)$ coloring algorithm this is simply without loss of generality as we can increase the 
number of passes by one and compute $\Delta$ in the first pass---given that we report the number of passes asymptotically anyway, this does not change anything. But the same approach for our $O(\Delta^2)$ coloring
algorithm  increases the number of passes to three instead. 

Nevertheless, there is a simple way to fix $O(\Delta^2)$ coloring algorithm without changing the number of passes. In the first pass, pick $O(\log{n})$ choices of for $\Delta$ in geometrically increasing values and maintain the 
counters for $C(n,\cdot)$ for these $O(\log{n})$ choices; in parallel, also compute $\Delta$ in this pass. At the end of the first pass, we  know $\Delta$ and can focus on the right choice of counters for $C(n,\Delta')$ where $\Delta' \geq \Delta \geq \frac12\cdot \Delta'$. The rest of the algorithm and its proof are exactly as before.

\subsection*{Acknowledgement} 

We thank the organizers of DIMACS REU in Summers 2020 and 2021, in particular Lazaros Gallos, for making this collaboration possible and all their help and encouragements along the way.   

\bibliographystyle{alpha}
\bibliography{new}

\end{document}